\algnewcommand\algorithmicparfor{\textbf{parfor}}
\algnewcommand\algorithmicpardo{\textbf{do}}
\algnewcommand\algorithmicendparfor{\textbf{end\ parfor}}
\newif\ifblackandwhite
\newcommand{\myrowcolour}{\rowcolor{white!89.803921568627459!black}}
\definecolor{Maroon}{cmyk}{0,1,1,0.5}
\newcommand{\highest}[1]{\textbf{#1}}
\newcommand{\highest}[1]{\textcolor{Maroon}{\textbf{#1}}}%
\pgfplotsset{compat=newest}
\definecolor{mycolor1}{rgb}{0.9719668012419661,0.4636196661831345,0.4272112904308525}
\definecolor{mycolor2}{rgb}{0.8273618134500182,0.5740127684739248,0.0}
\definecolor{mycolor3}{rgb}{0.576235910262533,0.6647921100239956,0.0}
\definecolor{mycolor4}{rgb}{0.0,0.7281510724816163,0.22080268587160473}
\definecolor{mycolor5}{rgb}{0.0,0.7553354789496858,0.6252760328031264}
\definecolor{mycolor6}{rgb}{0.0,0.7248676308880103,0.8911735834721597}
\definecolor{mycolor7}{rgb}{0.3804515572007965,0.6115286754014934,1.0}
\definecolor{mycolor8}{rgb}{0.857783940581795,0.44743713611909713,0.9846138428363592}
\definecolor{mycolor9}{rgb}{1.0,0.38135337340068687,0.7651117134811273}
\definecolor{Maroon}{RGB}{238,83,34}
\definecolor{Maroon}{RGB}{25,89,121}
\definecolor{bgorange}{RGB}{238,83,0}
\definecolor{bgorangelight}{RGB}{255,173,115}
\definecolor{bggreen}{RGB}{237,255,178}
\definecolor{bgblue}{RGB}{128,255,204}
\definecolor{Maroon}{RGB}{122,0,25}
\definecolor{Gold}{RGB}{255,204,51} 
\definecolor{pinegreen}{cmyk}{0.92,0,0.59,0.25}
\definecolor{royalblue}{cmyk}{1,0.50,0,0}
\definecolor{lavander}{cmyk}{0,0.48,0,0}
\definecolor{violet}{cmyk}{0.79,0.88,0,0}
\tikzstyle{citems}=[circle, draw, thin,fill=Gold, scale=0.8]
\tikzstyle{cusers}=[rectangle, draw, thin,fill=Maroon, scale=0.8]
\tikzstyle{cusers2}=[rectangle, draw, thin,fill=white, scale=0.8]
\tikzstyle{cred}=[circle, draw, thin,fill=Maroon, scale=0.8]
\tikzstyle{cgreen}=[rectangle, draw, thin,fill=lavander, scale=0.8]
\tikzstyle{rpath}=[ultra thick, Maroon, opacity=0.8]
\tikzstyle{gpath}=[ultra thick, royalblue, opacity=0.5]
\global\let\tikz@ensure@dollar@catcode=\relax
\newcommand{\movielens}{{\it movielens}}
\newcommand{\yahoo}{{\it yahoo}}
\newcommand{\pin}{{\it pinterest}}
\newcommand{\books}{{\it books}}
\newcommand{\movies}{{\it movies\&tv}}
\newcommand{\electronics}{{\it electronics}}
\newcommand{\puresvd}{\textsc{PureSVD}}
\newcommand{\eigenrec}{\textsc{EigenRec}}
\newcommand{\rpb}{\textsc{RP3}b}
\newcommand{\slim}{\textsc{SLIM}}
\newcommand{\vae}{\textsc{Mult-VAE}}
\newcommand{\dae}{\textsc{Mult-DAE}}
\newcommand{\apr}{\textsc{APR}}
\newcommand{\nais}{\textsc{NAIS}}
\newcommand{\recwalk}{\textsc{RecWalk}}
\newcommand{\pr}{\textsc{RecWalk}\textsuperscript{PR}}
\newcommand{\kstep}{\textsc{RecWalk}\textsuperscript{K-step}}
\begin{document}

\setcopyright{acmlicensed}
\acmJournal{TKDD}
\acmYear{2020} \acmVolume{14} \acmNumber{6} \acmArticle{64} \acmMonth{10} \acmPrice{15.00}\acmDOI{10.1145/3406241}

\title{Boosting Item-based Collaborative Filtering via Nearly Uncoupled Random Walks}

\author{Athanasios N. Nikolakopoulos}
\email{anikolak@umn.edu}
\orcid{1234-5678-9012}
\affiliation{%
  \institution{University of Minnesota}
  \streetaddress{499 Walter Library, 117 Pleasant Street SE}
  \city{Minneapolis}
  \state{Minnesota}
  \postcode{55455}
}

\author{George Karypis}
\affiliation{%
  \institution{University of Minnesota}
	\streetaddress{499 Walter Library, 117 Pleasant Street SE}
	\city{Minneapolis}
	\state{Minnesota}
	\postcode{55455}
	}
\email{karypis@umn.edu}

\renewcommand{\shortauthors}{Nikolakopoulos and Karypis}

\begin{abstract}
  \textit{Item-based models} are among the most popular collaborative filtering approaches for building recommender systems. \textit{Random walks} can provide a powerful tool for harvesting the rich network of interactions captured within 
  these models.   They can exploit indirect relations between the items, mitigate the effects of sparsity, ensure wider itemspace coverage, as well as increase the diversity of recommendation lists.  Their potential 
  however, can be hindered by the tendency of the walks to rapidly concentrate towards the central nodes of the graph, thereby significantly restricting the range of $K$-step distributions that can be exploited for personalized recommendations.  In this work we introduce \textbf{RecWalk}; a novel random walk-based method that leverages the spectral properties of \textit{nearly uncoupled Markov chains} to provably lift this limitation and prolong the influence of users' past preferences on the successive steps of the walk---thereby allowing the walker to explore the underlying network more fruitfully. 
  A comprehensive set of experiments on real-world datasets verify the theoretically predicted properties of the proposed approach and indicate that they are directly linked to significant improvements in top-$n$ recommendation accuracy. They also highlight RecWalk's potential in providing a framework for boosting the performance of item-based models.  
  RecWalk achieves state-of-the-art top-$n$ recommendation quality outperforming several competing approaches, including recently proposed methods that rely on deep neural networks.
\end{abstract}

 \begin{CCSXML}
	<ccs2012>
	<concept>
	<concept_id>10002951.10003317.10003347.10003350</concept_id>
	<concept_desc>Information systems~Recommender systems</concept_desc>
	<concept_significance>500</concept_significance>
	</concept>
	<concept>
	<concept_id>10002950.10003648.10003700.10003701</concept_id>
	<concept_desc>Mathematics of computing~Markov processes</concept_desc>
	<concept_significance>300</concept_significance>
	</concept>
	</ccs2012>
\end{CCSXML}

\ccsdesc[500]{Information systems~Recommender systems}
\ccsdesc[300]{Mathematics of computing~Markov processes}

\keywords{Collaborative Filtering, Top-N Recommendation, Random Walks, Nearly Uncoupled Markov Chains}

\maketitle

\section{Introduction}
Recommender Systems are information filtering tools that aim to help users navigate through an ocean of available options and support their decision making process regarding what items to buy, what movies to watch, what articles to read etc. Over the past decade they have become an indispensable component of most e-commerce applications as well as content delivery platforms. Top-$n$ recommendation algorithms are a crucial component of most recommender systems. They provide ranked lists of items tailored to the particular tastes of the users, as depicted by their past interactions within the system. 

\textit{Item-based methods} are among the most popular approaches for top-$n$ recommendation~\cite{sarwar2001item,ning2011slim,ning2015comprehensive}. Such methods work by building a model that captures the relations between the items, which is then used to recommend new items that are ``close'' to the ones each user has consumed in the past. Item-based models have been shown to achieve high top-$n$ recommendation accuracy~\cite{sarwar2001item,ning2011slim} while being scalable and easy to interpret~\cite{ning2015comprehensive}. The fact, however, that they typically consider only direct inter-to-item relations can impose fundamental limitations to their quality and make them brittle to the presence of sparsity---leading to poor itemspace coverage and significant decay in performance~\cite{adomavicius2005toward}. \textit{Random-walk-based methods} are particularly well-suited for alleviating such problems. Having the innate ability to relate items that are not directly connected by propagating information along the edges of the underlying item-to-item graph, random walk methods are more robust to the effects of sparsity and they can afford better coverage of the itemspace.  However, their effectiveness in terms of top-$n$ recommendation can be limited by the tendency of the walks to concentrate probability mass towards the central nodes of the graph---thus disproportionately boosting the recommendation scores of popular items in the system. This means that in order to produce high quality recommendations,  traditional random-walk-based techniques are restricted to exploit just the first few steps of the walk that are still influenced by the personalized starting distribution. This is in accordance to the mathematical properties of random walks and it has also been empirically verified that when applied to real-world networks, short-length random walks typically work best~\cite{Cooper:2014:RWR:2567948.2579244,christoffel2015blockbusters,pixie}.  

In this work\footnote{A preliminary version of this work has appeared in~\cite{RecWalk}.} we introduce \textbf{\recwalk}; a novel framework for top-$n$ recommendations that aims to combine the potential of item-based models to discern meaningful relations between the items, with the inherent ability of random walks to diffuse these relations across the itemspace and exploit the rich network of interactions they shape. 
\recwalk\ produces recommendations based on a \textit{random walk with node-dependent restarts}  designed to prolong the influence of the personalized initialization on the successive $K$-step landing probabilities of the walk---thereby eliminating  the need of ending the walks early. Intuitively, this gives the walker ``more time'' to harvest the information captured within the item model before succumbing to the ``pull'' of central nodes. The proposed random walk construction leverages the spectral properties of \textit{nearly uncoupled Markov chains}~\cite{courtois1977decomposability} in order to enforce a \textit{time-scale dissociation} of the stochastic dynamics of the walk towards equilibrium---thus increasing the number of successive landing probabilities that carry personalized  information useful for top-$n$ recommendation. 
The properties of our model are backed by rigorous theoretical analysis of the mixing characteristics of the walk which we empirically verify that are indeed intertwined with top-$n$ recommendation accuracy. A comprehensive set of experiments on real-world datasets showcase the potential of the proposed methodology in providing a framework for boosting the performance of item models. \recwalk\ achieves high recommendation quality outperforming state-of-the-art competing approaches, including recently proposed methods relying on deep neural networks. 

Open source implementation of the method is available at: 
\begin{center}
	\url{https://github.com/nikolakopoulos/RecWalk}
\end{center}

\section{Notation and Definitions}
\label{Sec:Notation}

\subsection{Notation}
Vectors are denoted by bold lower-case letters and they are assumed to be column vectors (e.g., $\mathbf{v}$). Matrices are represented by bold upper-case letters (e.g., $\mathbf{Q} $). The  $j$-th column and the  $i$-th row of matrix $\mathbf{Q}$ are denoted $\mathbf{q}_j$  and $\mathbf{q}^\transpose_{i}$, respectively. The $ij$-th element of matrix $\mathbf{Q}$ is denoted as $Q_{ij}$ or $[\mQ]_{ij}$ when such choice leads to clearer presentation. We use $\operatorname{diag}(\mathbf{Q})$ to refer to the  matrix that has the same diagonal with matrix $\mathbf{Q}$ and zeros elsewhere, and  $\operatorname{Diag}(\mathbf{v})$ to denote the matrix having vector $\mathbf{v}$ on its diagonal, and zeros elsewhere. We use a boldface  $\ones$ to denote a vector all the elements of which are equal to 1 (when we need to specify the dimension of such vector, we do so with a subscript, e.g., $\ones_n$).  Furthermore, we use $\lVert \cdot \rVert$ to denote a norm that---unless stated otherwise---is assumed to be the Euclidean. Sets are denoted with  calligraphic upper-case letters (e.g., $\mathcal{U,V}$).  Finally, symbol $\triangleq$ is used in definition statements.

\subsection{Definitions}
  Let $\set{U} = \{1,\dots,U\}$ be a set of \textit{users} and  $\set{I} = \{1,\dots,I\}$ a set of \textit{items}. Let $\mR\in \mathfrak{R}^{U \times I}$ be the \textit{user-item interaction matrix}; i.e., the matrix whose $ui$-th  element is 1 if user $u$ has interacted with item $i$, and 0 otherwise. Each user $u \in \set{U}$ is modeled by a vector $\vr_u^\transpose \in \mathfrak{R}^{I}$ which coincides with the  corresponding row of the user-item interaction matrix $\mR$; similarly, each item $i \in \set{I}$ will be modeled by a vector $\vr_i \in \mathfrak{R}^{U}$ which coincides with the corresponding column of matrix \mR.  The rows and columns of \mR\ are assumed to be non-zero; i.e., every user has interacted with at least one item, and for every item there exists at least one user who has interacted with it. Finally, we use the term \textit{item model} to refer to a matrix $\mW \in \mathfrak{R}^{I \times I}$ the $ij$-th element of which gives a measure of \textit{proximity} or \textit{similarity} between items $i$ and $j$. 
\label{sec:RecWalk}

\section{Random Walks and Item Models} 
\label{Sec:Motivation}
The fundamental premise of the present work is that combining random walks and item models allows for more effective utilization of the information captured in the item model; considering direct as well as transitive relations between the items, and also alleviating sparsity related problems. However, directly applying random walks on item models can lead to a number of problems that arise from their inherent mathematical properties and the way these properties relate to the underlying top-$n$ recommendation task. 

Imagine of a random walker \textit{jumping} from node to node on an item-to-item graph with  transition probabilities proportional to the inter-item proximity scores depicted by an item model. If the starting distribution of this walker reflects the items consumed by a particular user $u$ in the past, the probability the walker  \textit{lands} on different nodes after $K$ steps provide an intuitive measure of proximity that can be used to rank the nodes and recommend items to user $u$ accordingly. 

Specifically, if \mW\  denotes the item model and 
\begin{equation}
\mS\mathdef\Diag(\mW\ones)^{-1}\mW
\end{equation} the transition probability matrix of the walk, personalized recommendations for each user $u$ can be produced e.g., by utilizing the $K$-\textit{step landing probability distributions} of a walk rooted on the items consumed by $u$:   
\begin{equation}
\vpi_u^\transpose \mathdef \boldsymbol{\phi}_u^\transpose \mS^K, \qquad \boldsymbol{\phi}_u^\transpose \mathdef \tfrac{\vr_u^\transpose}{\lVert\vr_u^\transpose\rVert_1}
\label{model:srw}
\end{equation}
or by computing the stationary distribution of a \textit{random walk with restarts} on \mS, using $\boldsymbol{\phi}_u^\transpose$ as the restarting distribution. The latter approach is the well-known \textit{personalized PageRank} model~\cite{pagerank} with \textit{teleportation vector} $\boldsymbol{\phi}_u^\transpose$ and \textit{damping factor} $p$, and its stationary distribution can be expressed~\cite{langville2011google} as 
\begin{equation}
\vpi_u^\transpose \mathdef \boldsymbol{\phi}_u^\transpose \sum_{k=0}^{\infty}(1-p) p^k \mS^k.
\label{model:simplepr}
\end{equation}   
Clearly, both schemes harvest the information captured in the $K$-step landing probabilities 
\begin{displaymath}
\{\boldsymbol{\phi}_u^\transpose \mS^k\}_{k=0,1,\dots}.
\end{displaymath} In the former case, the recommendations are produced by using a fixed $K$; in the latter case they are computed as a weighted sum of all landing probabilities, with the $k$-step's contribution weighted by $(1-p)p^k$. But, how do these landing probabilities change as the number of steps $K$ increases? For how long will they still be significantly influenced by user's prior history as depicted in $\boldsymbol{\phi}_u^\transpose$?

When \mS\ is irreducible and aperiodic---which is typically the case in practice---the landing probabilities will converge to a \textit{unique} limiting distribution irrespectively of the initialization of the walk~\cite{grimmett2001probability}. This means that for large enough $K$, the $K$-step landing probabilities will no longer be ``personalized'' in the sense that they will become independent of the user-specific starting vector $\boldsymbol{\phi}_u^\transpose$.  Furthermore, long before reaching equilibrium, the usefulness of these vectors in terms of recommendation will start to decay as more and more probability mass gets concentrated to the central nodes of the graph---thereby restricting the number of landing probability distributions that are helpful for personalized recommendation. 
This imposes a fundamental limitation to the ability of the walk to properly exploit the information encoded in the item model.

Motivated by this, here we propose \textbf{\recwalk}; a novel random-walk model designed to give control over the stochastic dynamics of the walk towards equilibrium; provably, and irrespectively of the dataset or the specific item model onto which it is applied. In \recwalk\  the item model is incorporated as a direct item-to-item transition component of a walk on the user-item bipartite network. This component is followed by the random walker with a fixed probability determined by a model parameter that controls the \textit{spectral characteristics} of the underlying walk. This allows for effective exploration of the item model while the influence of the personalized initialization on the successive landing probability distributions remains strong. Incorporating the item model in a walk on the user-item graph (instead of the item graph alone) is crucial in providing control over the mixing properties; and as we will see in the experimental section of this work such mixing properties are intimately linked to improved top-$n$ recommendation accuracy. 

\section{Proposed Method}
\label{Sec:Proposed_Method}
\subsection{The \recwalk\ Stochastic Process} 
We define $\graph{G} \triangleq (\{\set{U},\set{I}\}, \set{E})$ to be the \textit{user-item bipartite network}; i.e., the network with  adjacency matrix $\mA_\set{G} \in \mathfrak{R}^{(U+I)\times(U+I)}$ given by
\begin{equation}
\mA_\graph{G} \mathdef \bmat{ \mzero 		   & \mR \\ 
	\mR^\transpose & \mzero  }.
\end{equation}
Consider a random walker jumping from node to node on $\graph{G}$. Suppose the walker currently occupies a node $c \in \set{U}\cup\set{I}$. In order to determine the next step transition the walker tosses a biased coin that yields heads with probability $\alpha$ and tails with probability $(1-\alpha)$: 
\begin{enumerate}
	\item If the coin-toss yields \textit{heads}, then: 
	\begin{enumerate}
		\item if $c \in \set{U}$, the walker jumps to one of the items rated by the user corresponding to node $c$ uniformly at random;
		\item if $c \in \set{I}$, the walker jumps to one of the users that have rated the current item uniformly at random; 
	\end{enumerate} 
	\item If the coin-toss yields \textit{tails}, then:
	\begin{enumerate}
		\item if $c\in\set{U}$, the walker stays put;
		\item if $c \in \set{I}$, the walker jumps to a related item abiding by an \textit{item-to-item transition probability matrix} (to be explicitly defined in the following section). 
	\end{enumerate} 
\end{enumerate}   
The stochastic process that describes this random walk is defined to be a homogeneous discrete time Markov chain with state space $\set{U}\cup\set{I}$; i.e., the transition probabilities from any given node $c$ to the other nodes, are fixed and independent of the nodes visited by the random walker before reaching $c$. 

\begin{figure}
	\centering
	\includegraphics{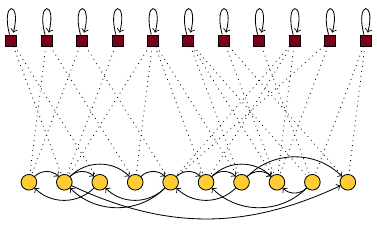}
	\caption{Illustration of the \recwalk\  Model.  \textcolor{GopherMaroon}{\textbf{Maroon}} nodes correspond to users; \textcolor{GopherGold}{\textbf{Gold}} nodes correspond to items. The dashed lines depict transitions between users and items, captured by matrix \mH. The solid lines correspond to node-to-node transitions captured by matrix \mM.}
	\label{fig:Illustration}
\end{figure}

\subsection{The Transition Probability Matrix}
The transition probability matrix \mP\ that governs the behavior of our random walker can be usefully expressed as a weighted sum of two stochastic matrices \mH\ and \mM\ as
\begin{equation}
\mP \mathdef \alpha\mH + (1-\alpha)\mM \label{transitionProbabilityMatrixP}
\end{equation} 
where $ 0< \alpha < 1 $, is a parameter that controls the involvement of these two components in the final model.
Matrix \mH\ can be thought of as the transition probability matrix of a simple random walk on the user-item bipartite network.  
Since every row and column of matrix \mR\ are non zero, matrix \mH\ is well-defined and it can be expressed as   
\begin{equation}
\mH  \mathdef \Diag(\mA_\graph{G}\ones)^{-1}\mA_\graph{G}.
\end{equation} 
Matrix \mM, is defined as 
\begin{equation}
\mM \mathdef \bmat{ \mI & \mzero \\ 
	\mzero & \mM_\set{I} }
\end{equation}
where $\mI \in \mathfrak{R}^{U \times U}$ the identity matrix and  $\mM_\set{I} \in \mathfrak{R}^{I \times I}$ a transition probability matrix designed to capture relations between the items. In particular, given an item model with non-negative weights \mW\  we define this matrix using the following stochasticity adjustment strategy: 
\begin{equation}
\mM_\set{I} \mathdef \frac{1}{\lVert \mW \rVert_\infty} \mW + \Diag(\ones-\frac{1}{\lVert \mW \rVert_\infty}\mW\ones).
\label{def:M_I}
\end{equation}   
The first term divides all the elements by the maximum row-sum of \mW\ and the second enforces stochasticity by adding residuals to the diagonal, appropriately. The motivation behind this definition is to retain the information captured by the relative differences of the item-to-item relations in \mW, ensuring that 
\begin{displaymath}
[\mW]_{ij} \geq [\mW]_{i'j'} \Rightarrow [\mM_\set{I}]_{ij} \geq  [\mM_\set{I}]_{i'j'}, \qquad \text{for all } i \neq j,i' \neq j'.
\end{displaymath}
This prevents items that are loosely related to the rest of the itemspace to disproportionately influence the inter-item transitions and introduce noise to the model.   

An illustration of the \recwalk\ model is given in Fig~\ref{fig:Illustration}.

\subsection{Choice of the core Item-model} 
The construction of matrix \mW\ itself can be approached in several ways depending on the available information, the characteristics of the underlying recommendation problem, the properties of matrix \mR, etc. 
The fact that random walk methods can achieve naturally itemspace coverage  allows us to define this component in a way that promotes sparsity in the relations between the items, having also the advantage to be easy to compute. 

In particular, we propose the use of a locally restricted variant of the well-known SLIM method~\cite{ning2011slim} that is forced to consider only \textit{fixed-size neighborhoods} when learning relations between the items. Concretely, for any given item $i$ we find the set of its $C$ closest neighbors (in terms of cosine similarity between their vector representations) 
and we form a matrix $\mN_i \in \mathfrak{R}^{U\times C}$, by selecting the corresponding columns of the initial matrix $\mathbf{R}$. We then solve for each item the optimization problem
\begin{equation}
\MINone{\vx \in \mathfrak{R}^{C}}{\frac{1}{2}\lVert\vr_i - \mN_i\vx \rVert^2_2 + \gamma_1 \lVert \vx\rVert_1 + \frac{1}{2}\gamma_2 \lVert \vx \rVert^2_2}
{\vx\geq \mzero} \label{myopt}
\end{equation}
and we fill the corresponding elements in the $i$-th column of matrix \mW. 

The estimation of \mW\ can be performed column-wise, in parallel, and it can be implemented efficiently by exploiting the sparsity of $\mN_i$, e.g., using coordinate descent~\cite{slim}. Note that the size of the per-column optimization problems is controlled by the choice of the numbers of neighbors to be considered (i.e.,$C$), thereby, making estimation of \mW\ scalable to larger itemsets. 

\bigskip
 The complete procedure for building a \recwalk\ model is given in Algorithm~\ref{alg:RecWalk}.\footnote{{Note that while in this paper the user-item interaction matrix \mR\ is assumed to be binary (\textit{implicit feedback} setting), 
 		\recwalk\ can be applied to 
 	 any non-negative feedback matrix that contains  e.g., `ratings', `click counts', etc. 
  }}

\bigskip
\begin{algorithm}
	\caption{\textsc{RecWalk Model}}
	\label{alg:RecWalk}
	\begin{algorithmic}
		\State \textbf{Input:} Input matrix \mR, \\ \qquad\ \ \ \  {\textit{parameters}:}  $\alpha$, $\gamma_1$, $\gamma_2$, $C$. \hfill
		\State \textbf{Output:} \recwalk\ transition probability matrix \mP.
		\ParFor { $i \in \set{I}$ }				
		\State Find the $C$ nearest neighbors of item $i$ and form $\mN_i$
		\begin{displaymath}
		\qquad\ \ \MINone{\vx \in \mathfrak{R}^C}{\frac{1}{2}\lVert\vr_i - \mN_i\vx \rVert^2_2 + \gamma_1 \lVert \vx\rVert_1 + \frac{1}{2}\gamma_2 \lVert \vx \rVert^2_2}
		{\vx\geq \mzero} 
		\end{displaymath}
		\State Fill the corresponding elements in the $i-$th column of \mW. 
		\EndParFor		
		\State 	$\mM_\set{I} \gets \frac{1}{\lVert \mW \rVert_\infty} \mW + \Diag(\ones-\frac{1}{\lVert \mW \rVert_\infty}\mW\ones)$	
		\State $
		\mA_\graph{G} \gets \bmat{ \mzero 		   & \mR \\ 
			\mR^\transpose & \mzero  }$
		\State $ \mP \gets \alpha\Diag(\mA_\graph{G}\ones)^{-1}\mA_\graph{G} + (1-\alpha) \bmat{\mI & \mathbf{0}\\ \mathbf{0}& \mM_\set{I}}$ 
	\end{algorithmic}
\end{algorithm}

\subsection{Recommendation Strategies}
Having defined the \recwalk\ transition probability matrix we can produce recommendations by exploiting the information captured in the successive landing probability distributions of a walk initialized in a user-specific way. Here, we will consider two recommendation strategies; namely
\begin{description}
	\item[\kstep:] The recommendation score of user $u$ for item $i$ is defined to be the probability the random walker lands on node $i$ after $K$ steps, given that she started on node $u$. Therefore, the recommendation score for item $i$ is given by the corresponding elements of   
	\begin{equation}
	\vpi_u^\transpose \mathdef \ve_u^\transpose \mP^K
	\end{equation} 
	where $\ve_u \in \mathfrak{R}^{U+I}$ is a vector that contains the element 1 on the position that corresponds to user $u$ and zeros elsewhere. The computation of the recommendations is presented in Algorithm~\ref{alg:RecWalkSstep} and it entails
	\begin{displaymath}
	\Theta(K\operatorname{nnz}(\mP))
	\end{displaymath}
	operations, where $\operatorname{nnz}(\mP)$ is the number of nonzero elements in \mP. 
	\item[\pr:] The recommendation score of user $u$ for item $i$ is defined to be the element that corresponds to item $i$ in the limiting distribution of a random walk with restarts on \mP, with damping factor $\eta$ and teleportation distribution $\ve_u$:  
	\begin{equation}
	\vpi_u^\transpose \mathdef \lim\limits_{K\to \infty}\ve_u^\transpose(\eta\mP + (1-\eta)\ones\ve_u^\transpose)^K. 
	\end{equation} 
	This can be computed using the power method as in Algorithm~\ref{alg:RecWalkPR}. Producing recommendations for a user involves 
	\begin{displaymath}
	\Theta((\log\epsilon/\log\eta)\operatorname{nnz}(\mathbf{P}))
	\end{displaymath}
	floating point operations for convergence up to a tolerance $\epsilon$~\cite{meyer2012stochastic}. 
\end{description} 

\begin{algorithm}
	\caption{\textsc{RecWalk}\textsuperscript{K-step}}
	\label{alg:RecWalkSstep}
	\begin{algorithmic}
		\State \textbf{Input:} \recwalk\ model \mP, user $u\in\set{U}$. 
		\State \textbf{Output:} Recommendation vector $\vpi_u$. 
		\State $\vpi_u^\transpose \gets \ve_u^\transpose$
		\For {$k \in 1,\dots,K $} 
		\State $\vpi_u^\transpose \gets \vpi_u^\transpose \mP$ 
		\EndFor
	\end{algorithmic}
\end{algorithm}
\begin{algorithm}
	\caption{\textsc{RecWalk}\textsuperscript{PR}}
	\label{alg:RecWalkPR}
	\begin{algorithmic}
		\renewcommand{\algorithmicrequire}{\textbf{Input:}}
		\renewcommand{\algorithmicensure}{\textbf{Output:}}
		\State \textbf{Input:} \recwalk\ model \mP, user $u\in\set{U}$, damping factor $\eta$. 
		\State \textbf{Output:} Recommendation vector $\vpi_u$.
		\State $\vx_{(0)}^{\transpose} \gets \ve_u^\transpose$ 
		\State $k\gets0$
		\Repeat
		\State $k \gets k+1$
		\State $\vx_{(k)}^{\transpose} \gets   \eta\vx_{(k-1)}^{\transpose} \mP + (1-\eta)\ve_u^\transpose$
		\State Normalize $\vx_{(k)}^{\transpose}$
		\Until{$\lVert \vx_{(k)}^{\transpose} - \vx_{(k-1)}^{\transpose}\rVert_1 < \mathit{tol}$}
		\State $\vpi_u \gets \vx_{(k)}$
	\end{algorithmic}
\end{algorithm}

\section{Theoretical Properties} 
As we will show in this section, a key property of the \recwalk\ model is that for small values of parameter $\alpha$  the \recwalk\ chain is going to be \textit{nearly uncoupled} into a large number of blocks, thereby allowing the random walk process dynamics towards equilibrium to disentangle into a \textit{slow-mixing} and a \textit{fast-mixing} component. This implies personalized landing probabilities even when the number of steps gets large.

\subsection{Nearly Uncoupled Markov Chains}
A nearly uncoupled Markov chain is a discrete time chain whose transition probability matrix is almost block diagonal~\cite{courtois1977decomposability,stewart1991sensitivity}. Formally, let $\mZ \in \mathfrak{R}^{n \times n}$ be the transition probability matrix of an irreducible and aperiodic Markov chain. Matrix $\mZ$ can always be written as
\begin{displaymath}
\mZ=\mZ^\star+\varepsilon \mC, 
\end{displaymath}  
where $\mZ^{\star}$ is a block-diagonal matrix of order $n$,  given by 
\begin{displaymath}
\mZ^\star \mathdef \pmat{\mZ^\star_{11} & \mzero & \cdots & \mzero \\
\mzero &   \mZ^\star_{22} & \ddots & \vdots \\
\vdots & \ddots & \ddots & \mzero \\
\mzero & \cdots & \mzero & \mZ^\star_{LL} } 
\end{displaymath}
and matrices $\mZ^\star_{II}$  
are irreducible stochastic matrices of order $n(I)$. 
Hence,   
\begin{displaymath}
n=\sum_{I=1}^{L}{n(I)}
\end{displaymath}  
and because both  $\mZ$ and  $\mZ^{\star}$ are stochastic, the row-sums of $\mC$ are zero.

Matrix $\mC$  and the non-negative real 
number $\varepsilon$ are selected such that for all rows it holds  
\begin{align}
\varepsilon \sum_{J\ne I}\sum_{l=1}^{n(J)} [\mC_{IJ}]_{ml} &=  \sum_{J\ne I}\sum_{l=1}^{n(J)}[\mZ_{IJ}]_{ml}, \\
\varepsilon & \mathdef \max_{m_I} \sum_{J\ne I}\sum_{l=1}^{n(J)} [\mZ_{IJ}]_{ml}
\end{align}
where we use $[\mZ_{IJ}]_{ml}$ to denote the element at the intersection of the $m$-th  row and $l$-th column of the $\mZ_{IJ}$  submatrix of $\mZ$.

Parameter $\varepsilon $ is referred to as the \emph{maximum degree of coupling} between the blocks. When $\varepsilon $ is sufficiently small, the Markov chain with transition probability matrix $\mZ$  is called \emph{nearly uncoupled} into $L$ blocks~\cite{courtois1977decomposability}.

\subsection{Mixing properties of \recwalk} 
\label{subsec:mixingproperties}
When graph $\graph{G}$ is connected the discrete time Markov chain defined by \mP\ is ergodic (see Appendix~\ref{Appendix:Ergodicity} for a proof). Thus, as the number of steps $K$ increases the landing probabilities of \recwalk\ will converge to a limiting distribution.  It is well known (see e.g.,\cite{stewart:1994introduction}) that the rate of convergence to this distribution depends on the modulus of {the subdominant eigenvalue of the transition probability matrix of the walk, which we denote $\lvert\lambda_2(\mP)\rvert$}. In particular, the asymptotic rate of convergence to the limiting distribution is the rate at which 
\begin{displaymath}
\lvert\lambda_2(\mP)\rvert^k \to 0.
\end{displaymath} Intuitively, the smaller $|\lambda_2(\mP)|$ is, the sooner the  landing probability distributions will start yielding recommendation vectors that are ``non-personalized,'' in the sense that they are similar for all users irrespectively of the items with which they have interacted.

The following theorem sheds more light to the spectral properties of matrix $\mathbf{P}$.

\begin{theorem}
	Let \mP\ be the \recwalk\ transition probability matrix with $\alpha \in (0,1)$ defined over a connected user-item network $\mathcal{G}$, and also let $\lambda(\mP)$ be the set of the eigenvalues of \mP. Irrespectively of the item model used to define matrix $\mM_\set{I}$ it holds 
	\begin{enumerate} \itemsep 1pt
		\item[(a)] $1-2\alpha \in \lambda(\mP)$
		\item[(b)] when $\alpha$ is small enough the Markov chain with transition probability matrix \mP\ will be nearly uncoupled into at least $U+1$ blocks. 
	\end{enumerate}
	\label{thm:EigMatP}
\end{theorem}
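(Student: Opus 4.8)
\medskip
\noindent\textbf{Proof plan.}\
The plan is to exploit the $2\times 2$ block structure that $\mP$ inherits from the bipartite adjacency matrix $\mA_\graph{G}$. Writing the simple-walk matrix as $\mH=\bmat{\mzero & \mH_{\set{U}\set{I}}\\ \mH_{\set{I}\set{U}} & \mzero}$, with $\mH_{\set{U}\set{I}}\mathdef\Diag(\mR\ones_I)^{-1}\mR$ and $\mH_{\set{I}\set{U}}\mathdef\Diag(\mR^\transpose\ones_U)^{-1}\mR^\transpose$ both well defined and row-stochastic (the rows and columns of $\mR$ being nonzero), we have
\begin{displaymath}
\mP=\bmat{(1-\alpha)\mI & \alpha\mH_{\set{U}\set{I}}\\ \alpha\mH_{\set{I}\set{U}} & (1-\alpha)\mM_\set{I}}.
\end{displaymath}
For part~(a) I would look for a vector that is simultaneously an eigenvector of $\mH$ and of $\mM$, since such a vector is automatically an eigenvector of every convex combination $\alpha\mH+(1-\alpha)\mM$; the natural candidate is the one carrying the period-$2$ (bipartite) structure of $\mH$.

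Concretely, set $\mathbf{v}\mathdef\bmat{\ones_U\\ -\ones_I}\neq\mzero$. Since $\mH_{\set{U}\set{I}}$ and $\mH_{\set{I}\set{U}}$ are stochastic, $\mH\mathbf{v}=\bmat{-\ones_U\\ \ones_I}=-\mathbf{v}$, so $\mathbf{v}$ is a right eigenvector of $\mH$ for the eigenvalue $-1$ that bipartiteness forces into the spectrum; and since $\mM_\set{I}$ is stochastic, $\mM\mathbf{v}=\bmat{\ones_U\\ -\ones_I}=\mathbf{v}$. Hence $\mP\mathbf{v}=\alpha(-\mathbf{v})+(1-\alpha)\mathbf{v}=(1-2\alpha)\mathbf{v}$, which gives $1-2\alpha\in\lambda(\mP)$ regardless of the item model, proving~(a).

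For part~(b), since $\graph{G}$ is connected $\mP$ is ergodic (as noted above), so the near-uncoupling framework applies, and I would exhibit an explicit near-uncoupling of the state space $\set{U}\cup\set{I}$ into the $U$ singleton blocks $\{1\},\dots,\{U\}$ together with the single item block $\set{I}$ --- that is, $U+1$ groups. Under this partition the off-block probability mass leaving \emph{any} state equals exactly $\alpha$: from a user state it is precisely the weight $\alpha$ placed on the $\mH$-component (which necessarily moves to an item), while from an item state it is again exactly the weight $\alpha$ on the $\mH$-component (which necessarily moves to a user), since the complementary $(1-\alpha)$-weighted part $\mM_\set{I}$ keeps the walker inside the item block. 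Therefore the maximum degree of coupling for this partition is $\varepsilon=\alpha$ --- a bound manifestly independent of $\mW$ --- so choosing $\alpha$ small enough drives $\varepsilon$ below any prescribed threshold, which is exactly the assertion that the chain is nearly uncoupled into $U+1$ blocks.

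The step I expect to be the real crux is verifying that this partition is \emph{admissible} in the sense of the definition, i.e.\ that after deleting the off-block entries and redistributing their mass the block-diagonal remainder $\mP^\star=\mP-\varepsilon\mC$ is a genuine direct sum of \emph{irreducible} stochastic matrices. For the singleton user blocks this is forced and immediate, since the only place to return the residual $\alpha$ is the diagonal, yielding the $1\times1$ stochastic block $[1]$. For the item block, returning the residual $\alpha$ to the diagonal gives $(1-\alpha)\mM_\set{I}+\alpha\mI$, which is stochastic and has exactly the off-diagonal support of $\mM_\set{I}$, hence irreducible whenever the digraph of $\mM_\set{I}$ is strongly connected --- which need not hold for an arbitrary item model. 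To honor the ``irrespectively of the item model'' claim I would use the freedom in the (otherwise unconstrained) diagonal blocks of $\mC$ to redistribute the returned mass so as to restore irreducibility, or else refine the item block along the recurrent structure of $\mM_\set{I}$, which only \emph{increases} the number of blocks --- still leaving at least $U+1$ of them --- while keeping $\varepsilon=\alpha$. This bookkeeping around a possibly reducible $\mM_\set{I}$ is the only genuinely delicate point; part~(a) and the $\varepsilon=\alpha$ computation are short.
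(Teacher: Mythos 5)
Your proposal is correct and follows essentially the same route as the paper: the same common eigenvector $\vv=\bmat{\ones_U \\ -\ones_I}$ of $\mH$ (eigenvalue $-1$, forced by bipartiteness) and of $\mM$ (eigenvalue $1$) for part (a), and the same partition $\{\{u_1\},\dots,\{u_U\},\set{I}\}$ with maximum degree of coupling exactly $\alpha$ for part (b). Two minor remarks: your direct computation $\mP\vv=\alpha\mH\vv+(1-\alpha)\mM\vv=(1-2\alpha)\vv$ is a legitimate shortcut past the paper's longer similarity-transformation argument (which the paper uses mainly to also exhibit the block upper-triangular form invoked later when discussing the clustering of the subdominant eigenvalues near $1-\alpha$), and your concern about the irreducibility of the adjusted item block when $\mM_\set{I}$ is reducible is a genuine point of rigor that the paper's proof passes over silently --- your proposed fix of either redistributing the returned mass or refining the item block along recurrent classes (which only increases the number of blocks beyond $U+1$) resolves it while keeping $\varepsilon=\alpha$.
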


\begin{proof}
	When $\mathcal{G}$ is connected,  the stochastic matrix $\mH$ is irreducible~\cite{langville2011google}. Moreover, since the graph is bipartite a simple random walk on $\graph{G}$ results in a periodic Markov chain with period $d=2$. Therefore, from the Perron-Frobenius theorem~\cite{grimmett2001probability} we get that 
	\begin{align}
	\lambda_1(\mH)  &=   1,  \nonumber\\
	\lambda_2(\mH)  &= e^{2 i\pi /d} = e^{i\pi} = -1. \nonumber
	\end{align}  
	The so-called Perron eigenvalue $\lambda_1(\mH)$ is associated with the right eigenvector $\boldsymbol{1}$; whereas eigenvalue $\lambda_2(\mH)$ with a right eigenvector which we denote $\mathbf{v}$. 
	
	The special structure of \mH\ makes it easy to guess the form of the eigenvector $\mathbf{v}$ as well as to verify that it actually denotes an eigenvector of matrix \mM\ too. In particular, we have 
	
	\begin{displaymath}
	\mathbf{v} \triangleq  [\overbrace{1 \quad 1 \quad 1 \quad \cdots \quad 1}^{\textrm{$|\mathcal{U}|$ user nodes}} \quad \overbrace{-1 \quad -1 \quad \cdots \quad -1}^{\textrm{$|\mathcal{I}|$ item nodes}}].
	\label{Eq:v}
	\end{displaymath} 
	It is easy to see that $\mathbf{v}$ is indeed an eigenvector of both matrices $\mH$ and $\mM$. Specifically, we have  
	\begin{equation}
	\mH\mathbf{v}  =  \begin{bmatrix}
	\mathbf{0} & \mathbf{H}_{12} \\
	\mathbf{H}_{21} & \mathbf{0} 
	\end{bmatrix} \begin{bmatrix}
	\mathbf{1}_{U}\\
	-\mathbf{1}_{I}
	\end{bmatrix}  =  \begin{bmatrix}
	-\mathbf{1}_{U}\\
	\mathbf{1}_{I}
	\end{bmatrix} = - \mathbf{v} 
	\label{eigenpairH}
	\end{equation}
	from which we get that $(-1,\mathbf{v})$ is an eigenpair of matrix $\mH$; and 
	\begin{equation}
	\mM\mathbf{v} 
	= \begin{bmatrix}
	\mathbf{I} & \mathbf{0} \\
	\mathbf{0} & \mathbf{M}_{\set{I}} 
	\end{bmatrix} 
	\begin{bmatrix}
	\mathbf{1}_{U}\\
	-\mathbf{1}_{I}
	\end{bmatrix}  
	 =  \begin{bmatrix}
	\mathbf{1}_{U}\\
	-\mathbf{1}_{I}
	\end{bmatrix} = \mathbf{v}
	\label{eigenpairM}
	\end{equation}
	which implies that $(1,\mathbf{v})$ is an eigenpair of matrix $\mM$.
	
	Now consider a non-singular matrix, 
	\begin{equation}
	\mQ \triangleq \begin{bmatrix}
	\mathbf{1} & \mathbf{v} & \mX\end{bmatrix},
	\end{equation} 
	which contains in its first two columns the  eigenvectors \ones\ and $\mathbf{v}$. Also let 
	\begin{equation}
	\mathbf{Q}^{-1} \triangleq \begin{bmatrix}
	\vy^\transpose_1 \\[0.07cm]
	\vy^\transpose_2 \\[0.07cm]
	\mY^\transpose
	\end{bmatrix}. 
	\end{equation}
	By definition it holds $\mathbf{Q}^{-1}\mQ  = \mathbf{I}$, which can be usefully written as 	
	\begin{eqnarray}
	\begin{bmatrix}
	\vy^\transpose_1 \mathbf{1} & \vy^\transpose_1\mathbf{v} & \vy^\transpose_1\mX \\[0.07cm]
	\vy^\transpose_2\mathbf{1} & \vy^\transpose_2\mathbf{v} & \vy^\transpose_2\mX \\[0.07cm]
	\mY^\transpose \mathbf{1} & \mY^\transpose\mathbf{v} & \mY^\transpose\mX \\
	\end{bmatrix} & = & \begin{bmatrix}
	1 & 0 & \mathbf{0} \\[0.07cm]
	0 & 1 & \mathbf{0} \\[0.07cm]
	\mathbf{0} & \mathbf{0} & \mathbf{I} \\
	\end{bmatrix}.
	\label{identity}
	\end{eqnarray}

	Now, if we consider the similarity transformation of the \recwalk\ transition probability matrix, $\mQ^{-1}\mP\mQ$, also taking into consideration the relations \eqref{eigenpairH}, \eqref{eigenpairM} and the identities (\ref{identity}), we have 
	\begin{eqnarray}
	\footnotesize		
	\label{Eq:SimilarityTransfomationG}
	\mQ^{-1}\mP\mQ 
	& = & \alpha\mQ^{-1}\mH\mQ + (1-\alpha) \mQ^{-1}\mM\mQ  \nonumber \\[0.1cm]
	& = & 	\alpha\begin{bmatrix}
	\mY^\transpose_1\mathbf{1} & (-1)\mY^\transpose_1\mathbf{v} & \mY^\transpose_1\mH\mX \\[0.07cm]
	\mY^\transpose_2\mathbf{1} & (-1)\mY^\transpose_2\mathbf{v} & \mY^\transpose_2\mH\mX \\[0.07cm]
	\mY^\transpose\mathbf{1} & (-1)\mY^\transpose\mathbf{v} & \mY^\transpose\mH\mX \\
	\end{bmatrix}  + 
	(1-\alpha) 	\begin{bmatrix}
	\mY^\transpose_1\mathbf{1} & \mY^\transpose_1\mathbf{v} & \mY^\transpose_1\mM\mX \\[0.07cm]
	\mY^\transpose_2\mathbf{1} & \mY^\transpose_2\mathbf{v} & \mY^\transpose_2\mM\mX \\[0.07cm]
	\mY^\transpose\mathbf{1} & \mY^\transpose\mathbf{v} & \mY^\transpose\mM\mX \\
	\end{bmatrix} 	\nonumber \\[0.1cm]
	& = & \begin{bmatrix}
	\alpha & 0 & \alpha\mY^\transpose_1\mH\mX \\[0.07cm]
	0 & -\alpha & \alpha\mY^\transpose_2\mH\mX \\[0.07cm]
	\mathbf{0} & \mathbf{0} & \alpha\mY^\transpose\mH\mX \\
	\end{bmatrix} + 
	\begin{bmatrix}
	1-\alpha & 0 & (1-\alpha)\mY^\transpose_1\mM\mX \\[0.07cm]
	0 & 1-\alpha & (1-\alpha)\mY^\transpose_2\mM\mX \\[0.07cm]
	\mathbf{0} & \mathbf{0} & (1-\alpha)\mY^\transpose\mM\mX
	\end{bmatrix} \nonumber\\[0.1cm]
	& = & \begin{bmatrix}
	1 & 0 & \alpha\vy^\transpose_1\mH\mX+(1-\alpha)\vy^\transpose_1\mM\mX \\[0.07cm]
	0 & 1-2\alpha & \alpha\vy^\transpose_2\mH\mX+(1-\alpha)\vy^\transpose_2\mM\mX \\[0.07cm]
	\mathbf{0} & \mathbf{0} & \alpha\mY^\transpose \mH\mX+(1-\alpha)\mY^\transpose\mM\mX 
	\end{bmatrix} \label{eq:therest}
	\end{eqnarray}	
	
	Thus, matrix \mP\ is similar to a block upper triangular matrix, the eigenvalues of which are the eigenvalues of its diagonal blocks. From that, we directly establish that $1-2\alpha$ is an eigenvalue of the \recwalk\ transition matrix $\mP$, and the first part of the theorem is proved. 
	
	To prove the second part it suffices to show that there exists a partition of the state space of the \recwalk\ chain into blocks, such that the maximum probability of leaving a block upon a single transition is upper-bounded by $\alpha$~\cite{stewart1991sensitivity}.  In particular, consider the partition  
	\begin{equation}
	\set{A} \mathdef  \{\{u_1\}, \{u_2\}, \dots, \{u_U\}, \set{I}\}.  
	\end{equation} 
	By definition, in the \recwalk\ model the probability of leaving $u$ is equal to $\alpha$, for all $u\in\set{U}$. Concretely, 
	\begin{equation}
	\Pr\{\text{jump from}\ u\in\set{U}\ \text{to any}\ j\neq u \} = \sum_{j \neq u} P_{uj}  =\sum_{j \neq u} \alpha H_{uj} = \alpha. 
	\end{equation}
	Similarly, the probability of leaving block $\set{I}$ upon a transition is 
	\begin{equation}
	\Pr\{\text{jump from}\ i \in \set{I}\ \text{to any}\ \ell\notin \set{I} \}  = \sum_{\ell \notin \set{I}} P_{i\ell}  =  \sum_{\ell \notin \set{I}} \alpha H_{i\ell} = \alpha. 
	\end{equation}  
	Therefore, the \recwalk\ chain can always be decomposed according to $\set{A}$  such that the maximum degree of coupling between the involved blocks is exactly equal to $\alpha$. Hence, choosing $\alpha$ to be sufficiently small ensures that the chain will be nearly uncoupled into (at least) $U+1$ blocks. 
\end{proof}

\bigskip
Theorem~\ref{thm:EigMatP} asserts that the proposed random walk construction ensures the existence of an eigenvalue equal to $1-2\alpha$. This means that the modulus of the eigenvalue that determines the rate of convergence to the limiting distribution will be at least $1-2\alpha$. Hence, choosing $\alpha$ allows us to ensure that the \recwalk\ process will converge as slow as we need to increase the number of landing probability distributions that can still serve as personalized recommendation vectors in our model---\textit{irrespectively of the particular user-item network or the chosen item model upon which it is built.} 

Moreover note that the spectral fingerprint of nearly uncoupled Markov chains is the existence of a set of subdominant eigenvalues that are relatively close (but not equal) to 1~\cite{courtois1977decomposability}. In our case, for small values of $\alpha$ these eigenvalues are expected to be clustered close to the value $1-\alpha$ (cf.~\eqref{Eq:SimilarityTransfomationG}).  The number of these eigenvalues depicts the number of blocks of states into which the chain is nearly uncoupled. Therefore, subject to $\alpha$ being small the \recwalk\ chain  will have at least $U+1$ eigenvalues clustered near the value 1, and it can be shown (see e.g.,\cite{stewart1991sensitivity}) that matrix \mP\ can be expressed  as  
\begin{equation}
\mP =  \ones\vpi^\transpose + \mT_\mathrm{slow} + \mT_\mathrm{fast}
\end{equation}
where $\vpi^\transpose$ is the stationary distribution of the walk,  $\mT_\mathrm{slow}$ is a \textit{slow transient} component, and $\mT_\mathrm{fast}$ is a \textit{fast transient} component. As $K$ gets large the fast transient term will diminish while the elements of the slow transient term will remain large enough to ensure that the recommendation vectors are not completely  dominated by $\ones\vpi^\transpose$. Of course, as $K$ gets larger and larger the relative influence of the first term will become stronger and stronger, up to the point where each user is assigned the exact same recommendation vector $\vpi^\transpose$; however, this outcome will be delayed by the existence of the slow transient term $\mT_\mathrm{slow}$. Note that in a simple random walk on \mW\ such time-scale dissociation of the stochastic dynamics of the walk is typically absent; and certainly it cannot be guaranteed in advance. On the contrary, the proposed random walk construction in \recwalk\ provides a clear mechanism to ensure such property, and as we will see in the experimental section of this paper this property alone can lead to significant improvements in top-$n$ recommendation quality compared to what one would get by using the item model directly.   

\bigskip
\subsection{Coverage}
\label{subsec:coverage}
It is often useful to be able to ensure that in principle all items can be recommended to the users. However, due to the sparsity of the user-item feedback matrix, such requirement is many times difficult to satisfy or comes at the expense of accuracy. Moreover, the opaque nature of many of the state-of-the-art approaches for top-n recommendation makes it impossible to guarantee such property in advance. 

In case of \pr\ this requirement is easily guaranteed. Before we proceed to the proof of this property, we list the following useful Lemma from~\cite{nikolakopoulos2015random}. 

\begin{lemma}[\cite{nikolakopoulos2015random}]
	\label{lemma}
	Let $\mathbf{A}$ be a primitive stochastic matrix and $\mB_1,\mB_2,\dots,\mB_n$ stochastic matrices; then matrix
	\begin{displaymath}
	\alpha \mA+\beta_1\mB_1+\dots+\beta_n\mB_n
	\end{displaymath} where $\alpha>0$ and $\beta_1,\dots,\beta_n\geq0$ such that \begin{displaymath}
	\alpha+\beta_1+\dots+\beta_n=1
	\end{displaymath} is a primitive stochastic matrix.
\end{lemma}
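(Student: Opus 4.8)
The plan is to verify stochasticity first and then primitivity via the standard characterization that a nonnegative square matrix is primitive if and only if some power of it is entrywise positive (equivalently, it is irreducible and aperiodic). Write $\mathbf{M}\triangleq\alpha\mathbf{A}+\beta_1\mB_1+\cdots+\beta_n\mB_n$. Stochasticity is immediate: $\mathbf{M}$ is a nonnegative combination of nonnegative matrices, hence nonnegative, and $\mathbf{M}\ones=\alpha\mathbf{A}\ones+\sum_{i}\beta_i\mB_i\ones=(\alpha+\beta_1+\cdots+\beta_n)\ones=\ones$, so every row sums to $1$.

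For primitivity, since $\mathbf{A}$ is primitive there is an integer $k\geq 1$ with $\mathbf{A}^k>\mzero$ entrywise. Expanding $\mathbf{M}^k$ by multilinearity, it is a sum over all length-$k$ words in the ``letters'' $\{\alpha\mathbf{A},\beta_1\mB_1,\dots,\beta_n\mB_n\}$ of the corresponding ordered products. Each such product is a product of nonnegative matrices scaled by a nonnegative scalar, hence entrywise nonnegative; and the single word consisting of $k$ copies of $\alpha\mathbf{A}$ contributes the term $\alpha^k\mathbf{A}^k$, which is strictly positive because $\alpha>0$ and $\mathbf{A}^k>\mzero$. Therefore $\mathbf{M}^k\geq\alpha^k\mathbf{A}^k>\mzero$, so $\mathbf{M}^k$ is entrywise positive and $\mathbf{M}$ is primitive.

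I do not expect any genuinely hard step here; the only points requiring care are (i) invoking the correct equivalence between primitivity and positivity of some matrix power, and (ii) using the hypothesis $\alpha>0$ in an essential way---dropping it allows the dominant term to vanish and the statement can fail (e.g.\ a nontrivial permutation matrix). If one wishes to avoid the power-positivity characterization, an equivalent route is to observe that $\mathbf{M}\geq\alpha\mathbf{A}$ entrywise, so the directed graph of $\mathbf{M}$ contains that of $\mathbf{A}$; hence $\mathbf{M}$ inherits irreducibility (strong connectivity) and aperiodicity (coprimality of cycle lengths) from $\mathbf{A}$, which again gives primitivity. I would present the multilinear-expansion version, as it is the shortest and makes the role of $\alpha>0$ transparent.
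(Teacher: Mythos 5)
Your proof is correct. Note that the paper itself does not prove this lemma---it imports it verbatim from the cited reference---so there is no in-paper argument to compare against; the argument you give (stochasticity by row sums, and primitivity from $\mathbf{M}^k \geq \alpha^k \mathbf{A}^k > \mathbf{0}$ for the $k$ witnessing primitivity of $\mathbf{A}$) is the standard one and is sound, with the role of $\alpha>0$ correctly isolated as the essential hypothesis.
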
 
\begin{proposition}[\pr\ Coverage Conditions]
	When $\graph{G}$ is connected the personalized recommendation vectors produced by \pr\ with $\alpha,\eta \in (0,1)$ are guaranteed to provide full itemspace coverage for every user in the system, i.e.,  
	\begin{displaymath}
	[\vpi_u]_j > 0, \qquad \forall j \in \set{I}, \quad \forall u \in \set{U}.
	\end{displaymath}
\end{proposition}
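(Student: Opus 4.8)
The plan is to show that the \pr\ recommendation vector $\vpi_u$ is the stationary distribution of an irreducible, aperiodic (in fact primitive) Markov chain, whose transition matrix is everywhere-positive enough to force every coordinate of the stationary vector to be strictly positive. First I would recall that $\vpi_u^\transpose$ is the limiting distribution of the walk with transition matrix
\begin{displaymath}
\mG_\eta \mathdef \eta\mP + (1-\eta)\ones\ve_u^\transpose,
\end{displaymath}
so it suffices to argue that $\mG_\eta$ is primitive and then invoke Perron--Frobenius to conclude that its unique stationary distribution has all entries positive; in particular all item coordinates are positive, which is exactly the claimed coverage property.

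The key steps, in order: (1) Observe that $\mP = \alpha\mH + (1-\alpha)\mM$ is stochastic (it is a convex combination of the stochastic matrices $\mH$ and $\mM$), and that $\ones\ve_u^\transpose$ is stochastic as well. (2) Apply Lemma~\ref{lemma} with $\mA = \ones\ve_u^\transpose$: this rank-one matrix is primitive (every row equals $\ve_u^\transpose$, and $(\ones\ve_u^\transpose)^2 = \ones\ve_u^\transpose$ is already positive on the $u$-th column, and more directly $\ones\ve_u^\transpose$ has a strictly positive column, so it is primitive). Taking $\alpha = 1-\eta > 0$, $\beta_1 = \eta$, and $\mB_1 = \mP$, the lemma gives that $\mG_\eta = (1-\eta)\,\ones\ve_u^\transpose + \eta\mP$ is a primitive stochastic matrix. (3) Since $\mG_\eta$ is primitive, by the Perron--Frobenius theorem it has a unique stationary distribution $\vpi_u^\transpose$ with $[\vpi_u]_k > 0$ for every state $k \in \set{U}\cup\set{I}$; restricting to the item states yields $[\vpi_u]_j > 0$ for all $j \in \set{I}$ and all $u \in \set{U}$.

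An alternative to step (2), if one wants to avoid appealing to primitivity of a rank-one matrix, is to use connectivity of $\graph{G}$ directly: when $\graph{G}$ is connected $\mH$ is irreducible, hence $\mP$ is irreducible, and then $\mG_\eta = \eta\mP + (1-\eta)\ones\ve_u^\transpose$ is irreducible and has a positive diagonal entry (at position $u$, since $[\mG_\eta]_{uu} \ge (1-\eta) > 0$), so it is primitive. Either route works; I would present the Lemma-based argument since the Lemma has just been stated for this purpose.

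The only genuinely delicate point — and the reason the connectivity hypothesis is needed — is ensuring irreducibility/primitivity reaches the item states: one must make sure that from the teleportation target (supported on the single user node $u$) the walk can actually reach every item. This is where we use that $\graph{G}$ is connected, so that $\mP$ (via the $\mH$ component) already connects $u$ to the rest of the bipartite graph; the $(1-\eta)$ teleportation then only needs to guarantee aperiodicity and a fixed escape route back to $u$. I do not expect any computational obstacle beyond verifying these structural facts, since the heavy lifting is packaged in Lemma~\ref{lemma} and in the ergodicity argument already referenced in Appendix~\ref{Appendix:Ergodicity}.
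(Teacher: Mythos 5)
Your overall strategy---identify $\vpi_u$ as the stationary distribution of $\eta\mP + (1-\eta)\ones\ve_u^\transpose$, show that matrix is primitive, and invoke Perron--Frobenius---is exactly the paper's. However, your preferred route through Lemma~\ref{lemma} applies it with the roles reversed, and the step justifying primitivity of $\ones\ve_u^\transpose$ is wrong. A primitive matrix must have some power that is entrywise positive, but $(\ones\ve_u^\transpose)^k = \ones\ve_u^\transpose$ for every $k$ (since $\ve_u^\transpose\ones = 1$), and this matrix has zeros in every column except the $u$-th; having one strictly positive column does not imply primitivity. Worse, $\ones\ve_u^\transpose$ is not even irreducible: in its digraph every node points only to $u$, so no other state is reachable from $u$. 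The lemma therefore cannot be invoked with $\mA = \ones\ve_u^\transpose$. The paper applies it the other way around: $\mA = \mP$ is the primitive matrix---primitivity following from the ergodicity of the \recwalk\ chain established in Proposition~\ref{Prop:ergodicity} (connectivity of $\graph{G}$ gives irreducibility via $\mH$, and the $(1-\alpha)\mI$ block of $\mM$ supplies self-loops at user nodes, giving aperiodicity)---while $\ones\ve_u^\transpose$ enters only as one of the merely stochastic $\mB_i$'s with coefficient $1-\eta \geq 0$.

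Your fallback argument is sound and essentially recovers the paper's proof: connectivity makes $\mH$, hence $\mP$ (for $\alpha>0$), irreducible, so $\eta\mP + (1-\eta)\ones\ve_u^\transpose$ is irreducible and has the positive diagonal entry $[\eta\mP + (1-\eta)\ones\ve_u^\transpose]_{uu} \geq 1-\eta > 0$, which yields primitivity; Perron--Frobenius then gives strict positivity of every coordinate of the stationary vector, in particular the item coordinates. One small correction to your closing remark: the teleportation term is not needed for aperiodicity either, since $\mP$ already has self-loops at the user nodes. So the fix is simply to swap which matrix plays the primitive role in the lemma (citing the appendix ergodicity result for $\mP$), or to present your alternative argument as the main one.
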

\begin{proof}
	Notice that $\vpi_u$ in case of \pr\ is defined as the limiting distribution of a Markov chain with transition probability matrix 
	\begin{equation}
	\label{eq:matrixforproof}
	\eta\mP + (1-\eta)\ones\ve_u^\transpose.
	\end{equation} 
	Under our assumptions, the \recwalk\ chain is ergodic (see Proposition~\ref{Prop:ergodicity} in Appendix) and hence, matrix \mP\ is a primitive stochastic matrix~\cite{meyer2000matrix}. This makes stochastic matrix \eqref{eq:matrixforproof}, a linear combination of a primitive matrix, and the rank-one stochastic matrix, $\ones\ve_u^\transpose$; thus, Lemma~\ref{lemma} applies, and we get that \eqref{eq:matrixforproof} is a primitive stochastic matrix as well. 
	
	Therefore, the recommendation vector $\vpi_u$ produced by \pr\ will coincide with the unique dominant left eigenvector of the primitive stochastic matrix \eqref{eq:matrixforproof} (up to rescaling of its entries to sum to 1). However, from the Perron Frobenius theorem, we know that every element of the dominant left eigenvector of primitive matrices is necessarily strictly positive (see e.g.,~\cite{meyer2000matrix}). Thus, when $\graph{G}$ is connected and $\alpha,\eta \in (0,1)$, the recommendation vector $\vpi_u$ produced by \pr\ is guaranteed to assign positive scores to every item in the system, i.e,   
	\begin{displaymath}
	[\vpi_u]_j > 0, \qquad \forall j \in \set{I}, \quad \forall u \in \set{U}.
	\end{displaymath}
	as needed; and the proof is complete.
\end{proof}

\kstep\ can guarantee item-space coverage too, provided that $K$ is sufficiently large for the walk to discover the complete set of items. This is a direct consequence of the primitivity of \mP. Bounding the necessary number of steps $K$ to ensure coverage would require analysis of relevant combinatorial invariants of the particular user-item network and item-model under consideration, and thus, goes beyond the scope of this work. In practice, we find that even modest values of $K$ are enough to ensure such property for every dataset we experimented with. Importantly, we also find that for \recwalk\ item-space coverage \textit{does not come at the expense of accuracy}, as it is the case for other item-based recommendation algorithms.

\section{Experimental Setting}
\label{Sec:Experimental_Setting}

\subsection{Datasets} 
Our qualitative evaluation is based on six real-world publicly available datasets; namely (i) the \textit{movielens} dataset which contains the ratings of  users for movies and it has been used extensively for the evaluation of top-$n$ recommendation methods; (ii) the \textit{yahoo} dataset which is a subset of the Yahoo!R2Music dataset (obtained from~\cite{nikolakopoulos2014use}) containing the ratings of users for songs; (iii) the \textit{electronics} dataset~\cite{He:2016:UDM:2872427.2883037} which contains review scores of electronic products in Amazon (we keep users that have rated at least 10 items, and items that have been rated by at least 20 users); (iv) the \textit{movies\&tv} dataset~\cite{He:2016:UDM:2872427.2883037} which contains review scores of movies and tv shows in Amazon (we keep users that have scores at least 15 items, and items that have been scored by at least 30 users), (vi) \textit{books} datasets~\cite{He:2016:UDM:2872427.2883037}, which contain review scores of books in {Amazon} (we keep users that have provided review scores for at least 15 books, and books that have been scored by at least 30 users); and (vi) the \textit{pinterest} dataset (obtained from~\cite{ncf}) which captures the interactions of users regarding images where each interaction denotes whether the user has ``pinned'' the image to her own board. Basic statistics about the datasets can be found in Table~\ref{tab:Datasets}.

\begin{table}[h]
	\caption{ Statistics of datasets used to assess recommendation accuracy. }
	\centering
	\small
	\begin{tabular} {lrrrr r  }
		\toprule
		\toprule
		Name                & \#users & \#items & \#interactions & 
		& density \\
		\midrule
		\myrowcolour%
		\textit{electronics} & 3,769  & 1,598   & 66,060   & 
		& 0.0109  \\
		\textit{movielens}  & 6,040   & 3,706   & 1,000,029  & 
		& 0.0447  \\
		
		\myrowcolour%
		\textit{yahoo}      & 7,307   & 3,312   & 404,745   & 
		& 0.0167  \\

		\textit{movies\&tv}\phantom{asdf asdf} & 10,039  & 5,400   & 437,763   & 
		& 0.0081  \\
		\myrowcolour%
		\textit{books}      & 43,550  & 24,811  & 1,777,072  & 
		& 0.0016  \\
		\textit{pinterest}   & 55,187  & 9,916  & 1,463,581  & 
		& 0.0027  \\
		\bottomrule
		\bottomrule
	\end{tabular}
	\label{tab:Datasets}
\end{table}

 Furthermore, to assess the scalability of our method we make use of several larger datasets from the collection~\cite{He:2016:UDM:2872427.2883037}. Their characteristics can be found in Appendix~\ref{Appendix:LargeItemspaces}.

\subsection{Evaluation Methodology and Metrics}
\label{subsec:evaluationprotocol}
To  evaluate  the  top-$n$ recommendation performance, we adopted the widely used \textit{leave-one-out} evaluation protocol~\cite{rendle2009bpr,ning2011slim,ncf,nikolakopoulos2019eigenrec,koren2010factor}. In particular, for each user we randomly select one item she liked\footnote{When rating information is available in the original data, the per-user target item is randomly sampled  among the highest rated items of each particular user in order to ensure that it indeed denotes an item that the user liked. Such approach is in accordance with the methodology described in the seminal papers~\cite{Cremonesi:2010:PRA:1864708.1864721,koren2010factor}; in our setting, however, all users are equally represented in the testset. }
and we create a test set $\set{T}$. The rest of the dataset is used for training the models. For model selection we repeat the same procedure on the training data and we create a validation set $\set{V}$;  and for each method considered we explore the hyperparameter space to find the model that yields the best performance in recommending the items in $\set{V}$, and then we evaluate its out-of-sample performance based on the held-out items in $\set{T}$. For the evaluation we consider for each user her corresponding test item alongside 999 randomly sampled unseen items and we rank the 1000 item lists based on the recommendation scores produced by each method. During training of all competing methods we consider only binary feedback.

The evaluation of the top-$n$ recommendation performance is based on three widely used ranking-based metrics; namely the \textit{hit ratio} (HR@$n$), the \textit{average reciprocal hit rank} (ARHR@$n$), and the truncated \textit{normalized discounted cumulative gain} (NDCG@$n$) over the set of users (for a detailed definition we refer the reader to e.g.,~\cite{vaedae,nikolakopoulos2019eigenrec}). For each user, all metrics compare the predicted rank of the held-out item with the ideal recommendation vector which ranks the held-out item first among the items in each user's test-set list. For all competing methods we get the predicted rank by sorting the recommendation scores that correspond to the items included in each user's test-set list. While HR@$n$ gives a perfect score if the held-out item is ranked within the first $n$, ARHR@$n$ and NDCG@$n$ use a monotonically increasing discount to emphasize the importance of the actual position of the held-out item in the top-$n$ recommendation list. 

\section{Experimental Results} 
\label{Sec:Experimental_Results}
\label{sec:Empirical}

\subsection{Effect of Parameter $\alpha$}
The theoretical analysis of our method suggests that parameter $\alpha$ controls the convergence properties of the \recwalk\ process and it can be chosen to enable a time-scale dissociation of the stochastic dynamics towards equilibrium that ensures a larger number of personalized landing distributions. Here we verify experimentally the predicted properties and we evaluate their effect on the recommendation quality of the $K$-step landing probabilities. 

We build the item model \mW\ that yields the best performance on the validation set and we use it to create matrix $\mM_\set{I}$ as in \eqref{def:M_I}. We then build the \recwalk\ model as in Algorithm~\ref{alg:RecWalk}, we run it for different values of $\alpha$ ranging from 0.005 to 0.5, and we report: (i) the performance in terms of average NDCG$@n$ (for $n=10$) across all users for values of steps $K$ up to 30 (Fig.\ref{fig:alphatests}-A); (ii) the spectra of the corresponding transition probability matrices \mP\ (Fig.\ref{fig:alphatests}-B); (iii) the peak performance per parameter $\alpha$ along with the step for which it was achieved (Fig.\ref{fig:alphatests}-C); and (iv) the performance of \recwalk\ with respect to using the base model \mW\ directly (Fig.\ref{fig:alphatests}-D).

\begin{figure*}[h!]
	\centering
	\includegraphics[width=0.99\linewidth]{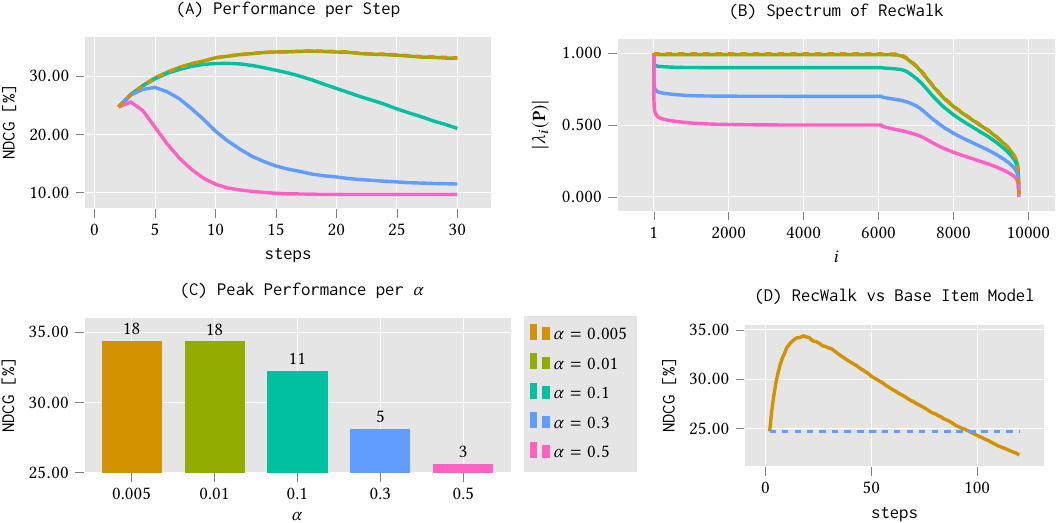}
	\caption{Fig (A) reports the performance of \recwalk\ in terms of NDCG@10 as a function of the steps for different values of the parameter $\alpha$. Fig (B) plots the spectrum of the \recwalk\ transition probability matrix for different values of $\alpha$. Fig (C) reports the peak performance for the different values of $\alpha$, as well as the number of steps for which it is achieved (on top of each bar). Fig (D) reports \recwalk\ performance (with $\alpha$ fixed at 0.005) for different number of steps compared to the performance one would get using the item model directly (blue dashed line). }
	\label{fig:alphatests}
\end{figure*}

We find that as the value of $\alpha$ gets smaller the top-$n$ recommendation quality increases and stabilizes for $\alpha<0.01$. Similarly the number of steps that yield the best performance increase (see Fig.\ref{fig:alphatests}-C). The spectrum of the corresponding transition probability matrices reflects the theoretically predicted properties of the model. Indeed for very small values of $\alpha$ we observe that the subdominant eigenvalues cluster near the value $1$, thus forming the slowly varying component of matrix $\mP$, which ensures that the successive landing probabilities of the  random walk are  influenced by the initial state for longer. Furthermore, we find that the proposed methodology entails a significant increase in performance with respect to what one would get by using the proposed base item-model directly. In particular the recommendation performance of the $K$-step landing probabilities of \recwalk\ overpasses the performance of the base model (see Fig.\ref{fig:alphatests}-D) for a wide range of steps up to a maximum increase of 39.09\%  for $K=18$ steps. {Finally, notice that when the number of steps becomes very large the recommendation performance decays, as every $\vpi_u$ starts to converge to the same stationary distribution $\vpi$. In other words, as $K$ gets larger and larger similar recommendations will be produced for every user in the system.   Due to the mixing properties of the \recwalk\ chain, however, one can ensure that such outcome will not occur for small $K$---thereby allowing \recwalk\ to effectively explore the underlying network \textit{without losing focus of the user-specific initialization}.}

Our results suggest that the mixing properties of the \recwalk\ chain are indeed intertwined with the top-$n$ recommendation quality and can lead to a significant boost in performance. This justifies the intuition that motivated the particular design of the walk. 
	{For additional experiments regarding the effects of sparsity on the performance of \recwalk\ see Appendix~\ref{Appendix:Sparsity}.}

\subsection{RecWalk as a Framework}
In the definition of the inter-item transition probability matrix we proposed a particular strategy for constructing matrix \mW\ that was designed to promote locality on the direct item-to-item transitions  while being  easy to compute. Instead of this particular matrix \mW\ one could use any model that captures inter-item relations. But does our approach offer any benefit with respect to performing simple random walks on the corresponding item-model or to simply using the item model directly? 

Here we explore this question by empirically evaluating two commonly used item models. Namely: 
\begin{enumerate}
	\item a cosine similarity model $\mW_{\textsc{cos}}$ defined such that its $ij$-th element is given by $\vr_i^\transpose \vr_j / (\lVert \vr_i \rVert \lVert \vr_j \rVert)$; 
	\item a \text{SLIM} model which learns a matrix $\mW_{\textsc{slim}}$ by solving an $\ell_1,\ell_2$ regularized optimization problem (see~\cite{ning2011slim} for details).
\end{enumerate} 
We consider the respective base models alongside six approaches based on random walks; namely
\begin{enumerate}
	\item[(i)] \text{SRW}, which recommends using the $K$-step distribution of a simple random walk on \mW\ with transition probability matrix $\mS$ initialized with $\boldsymbol{\phi}_u^\transpose$ as in \eqref{model:srw};
	\item[(ii)] \text{PR}, which produces recommendations based on \mS\ as in \eqref{model:simplepr};
	\item[(iii)] \kstep;
	\item[(iv)]  \pr;
	\item[(v)] \text{\recwalk$[\mM_\set{I}]$\textsuperscript{K-step}}, which produces recommendations as in \eqref{model:srw} but using the \recwalk\ inter-item transition probability matrix introduced in \eqref{def:M_I} instead of \mS;
	\item[(vi)] \text{\recwalk$[\mM_\set{I}]$\textsuperscript{PR}}, which produces recommendations as in \eqref{model:simplepr} using RecWalk's $\mM_\set{I}$ instead of \mS. 
\end{enumerate}

\begin{table}[h]
	\caption{Top-n Recommendation Quality under Different Random Walk Constructions}
	\label{tab:FrameWorkResults}
	\small
	\centering
	\begin{threeparttable} 
		\begin{tabular}{l cc}
			\toprule%
			\toprule%
			\centering%
			Method
			& \text{COS}
			& \text{SLIM}
			\\	
			\midrule	
			\myrowcolour%
			Base model \phantom{asdfa asdf asdf adf asdg asdf asdf asdfas dfas} & 17.61 & 27.28 \\
			\midrule
			\text{SRW} & 17.82 & 25.37 \\
			\myrowcolour%
			\text{PR} & 18.11 & 25.37 \\
			\kstep\  & 20.52 & 31.87
			\\
			
			\myrowcolour
			\pr\  & 20.33 & 31.80
			\\
			\text{\recwalk$[\mM_\set{I}]$\textsuperscript{K-step}}  & 17.85 & 31.41\\
			\myrowcolour%
			\text{\recwalk$[\mM_\set{I}]$\textsuperscript{PR}} & 20.27 & 31.78 \\
			\bottomrule
			\bottomrule
		\end{tabular}
		\begin{tablenotes}
			\footnotesize
			\item Hyperparameters: \text{SRW}: K$ \in \{1,\dots,50\}$; \text{PR}: $p \in \{0.1,\dots,0.9\}$;  
			\text{SLIM}: $\lambda,\beta \in \{0.1, 0.5, 1, 3, 5, 10, 20\}$; \recwalk: $\alpha=0.005$ and $K\in\{1,\dots,30\}$ for \kstep\  and $\eta\in\{0.1,0.2,\dots,0.9\}$ for  \pr.
		\end{tablenotes}
	\end{threeparttable}
\end{table}

We run all models on the \movielens\ dataset and in Table~\ref{tab:FrameWorkResults} we report their performance on the NDCG@$n$ metric.  We see that \kstep\  and \pr\    were able to boost the performance of both item models (up to +16.52\%  for COS and +16.82\% for SLIM) with the performance difference between the two variants being insignificant. Applying simple random walks (\text{SRW}) or random walks with restarts (PR) directly to the row-normalized version of the item graph does not perform well (+2.84\% in case of COS and -7\% in case of SLIM). In particular in the case of \text{SRW} we observed a rapid decay in performance (after the first step for SLIM and after the first few steps for COS); similarly in case of \text{PR} the best performance was obtained for very small values of $p$---essentially enforcing  the $K$-step landing probabilities after the first few steps to contribute negligibly to the production of the recommendation scores (cf.~\eqref{model:simplepr}). Using  \recwalk's inter-item transition probability matrix alone, on the other hand, performed very well especially when we use the SLIM model as a base.

To gain more insight into the observed differences in performance of the walks on the item graphs, we also plot the spectra of the transition probability matrices $\mM_\set{I}$, alongside the spectra of the respective matrices \mS\  (Fig.~\ref{fig:FrameworkOtherWalks}). We see that in case of \mS\ the magnitude of the eigenvalues cause the walks to mix very quickly.  In case of matrix $\mM_\set{I}$ on the other hand, the eigenvalues decay more gracefully and on the SLIM graph in particular, there appears to be a large number of eigenvalues near 1, which delay the convergence of the landing distributions towards equilibrium. This effect is not as pronounced in case of COS which is reflected in the small increase in performance in case of \text{\recwalk$[\mM_\set{I}]$\textsuperscript{K-step}}.

\begin{figure*}[!ht]
	\centering	 
	\includegraphics[width=0.9\linewidth]{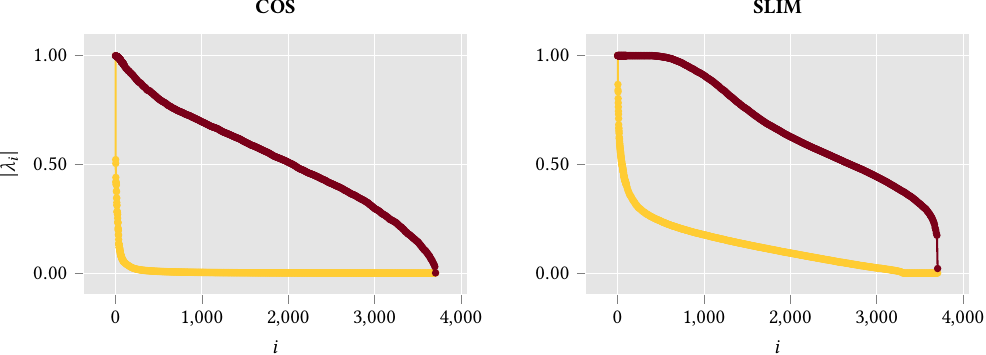}
	\caption{The figure plots the spectra of the transition probability matrices \mS\ (\textcolor{GopherGold}{\textbf{gold}} line) and $\mM_\set{I}$ (\textcolor{GopherMaroon}{\textbf{maroon}} line) defined using the corresponding item models.}
	\label{fig:FrameworkOtherWalks}
\end{figure*}

Again our experiments reveal a clear connection between the mixing properties of the walks and their potential in achieving good recommendation quality. Note also that the stochasticity adjustment strategy proposed in \eqref{def:M_I} seems to promote slow mixing in itself. However, using $\mM_\set{I}$ alone cannot in general guarantee this property irrespectively of the underlying item model whereas using the complete \recwalk\ model can give absolute control over convergence (as Theorem~\ref{thm:EigMatP} predicted).

\begin{table*}[h!]
	\caption{Top-$n$ recommendation quality of the competing approaches in terms of HR@10, ARHR@10 and NDCG@10.}
	\label{tab:Results}
	\centering
	\small
	\begin{threeparttable}
		\begin{tabular}{l c ccc c ccc}
			\toprule%
			\toprule%
			\centering%
			& \phantom{adas}
			& \multicolumn{3}{c}{{{\it movielens}}}
			& \phantom{adas}
			& \multicolumn{3}{c}{{{\it yahoo}}}
			\\
			\cmidrule[0.4pt](lr{0.125em}){3-5}%
			\cmidrule[0.4pt](lr{0.125em}){7-9}%
			Method    &      & \phantom{A}HR[\%]\phantom{R}                                      & ARHR[\%]            & NDCG[\%]          
			&  &
			\phantom{A}HR[\%]\phantom{R}               & ARHR[\%]                                    & NDCG[\%]             
			\\
			\midrule
			\myrowcolour%
			\puresvd\    &   & 44.14                                  & 19.33           &
			25.36           & 
			&
			38.68           & 18.30                                  & 22.62
			\\
			
			SLIM     &       & 46.34                                  & 21.39           & 27.28           &
			& 
			52.24           & 23.21                                  & 30.03
			\\
			
			\myrowcolour
			\eigenrec\  &    & 45.21                                  & 20.44           &
			26.35           & 
			&
			48.12           & 23.30                                  & 29.23
			\\
			
			\dae\     &      & 44.06                                  & 18.97           & 24.83           &
			& 
			45.37           & 21.46                                  & 27.07
			\\
			\myrowcolour
			\vae\    &       & 44.35                                  & 19.50           & 25.31           &
			& 
			45.09           & 21.22                                  & 26.80
			\\
			
			\apr\  &  & 42.45 & 18.69 & 24.24 & 
			&
			44.75 & 19.50 & 25.36 
			\\
			
			\myrowcolour
			\rpb\    &       & 34.87                                  & 15.02           & 19.66           &
			& 
			41.51           & 17.82                                  & 22.94
			\\
			
			\nais\    &      & 46.36                                  & 20.65           & 26.68           &
			&
			50.53           & 23.64                                  & 29.91
			\\

			\midrule
			\myrowcolour
			\kstep    &       &{50.28} & {27.20} & {33.13} &
			&
			\highest{55.02} & \highest{28.94} & \highest{35.10}  
			\\
			\pr     &      & \highest{52.52}                                  & \highest{27.74}           & \highest{33.57}           &
			&
			54.78           & 28.71                                  & 34.87
			\\
		\end{tabular}
		\begin{tabular}{l c ccc c ccc}
		\midrule
		\centering%
		& \phantom{adas}
		& \multicolumn{3}{c}{{{\it pinterest}}}
		& \phantom{adas}
		& \multicolumn{3}{c}{{{\it movies\&tv}}}
		\\
		\cmidrule[0.4pt](lr{0.125em}){3-5}%
		\cmidrule[0.4pt](lr{0.125em}){7-9}%
		Method    &      & \phantom{A}HR[\%]\phantom{R}                                      & ARHR[\%]            & NDCG[\%]                   
		&  &
		\phantom{A}HR[\%]\phantom{R}              & ARHR[\%]                                    & NDCG[\%]
		\\
		\midrule
		\myrowcolour%
		\puresvd\    &   & 30.97 & 11.85 & 16.30 & & 22.58                                  & 09.88           & 12.86
		\\
		
		SLIM     &      & 34.17 & 13.63 & 18.57 & 
		& 27.26                                  & 12.95           & 16.37
		\\
		
		\myrowcolour
		\eigenrec\  &    & 33.81 & 13.51 & 18.41  &
		& 25.22                                  & 11.44           & 14.66
		\\
		
		\dae\     &     & 35.03 & 13.79 & 18.77    &
		& 27.10                                  & 11.96           & 15.50 
		\\
		\myrowcolour
		\vae\    &      & 35.13 & 13.73 & 18.71   &
		& 26.72                                  & 12.05           & 15.40 
		\\
		
		\apr\  & & 33.93 & 13.11 & 17.94  &
		& 22.99 & 09.46 & 12.58
		\\
		
		\myrowcolour
		\rpb\    &     & 27.01	& 8.07 & 12.45 &
		& 26.52 & 11.79 & 15.22
		\\
		
		\nais\    &     & 34.06 & 12.95 & 17.82   &
		& 24.35                                  & 10.87           & 13.99 
		\\	
		
		\midrule
		\myrowcolour
		\kstep    &      & \highest{35.38} & \highest{14.07} & 18.95  &
		& \highest{28.34}  & 12.99 & \highest{16.51} 
		\\
		\pr     &     & 35.29 & \highest{14.07} & \highest{19.00}  &
		&  28.15           & \highest{13.13}           & \highest{16.51}
		\\
	\end{tabular}
		\begin{tabular}{l c ccc c ccc}
			\midrule%
			\centering%
			& \phantom{adas}
			& \multicolumn{3}{c}{{{\it books}}}
			& \phantom{adas}
			& \multicolumn{3}{c}{{{\it electronics}}}
			\\
			\cmidrule[0.4pt](lr{0.125em}){3-5}%
			\cmidrule[0.4pt](lr{0.125em}){7-9}%
			Method    &      & \phantom{A}HR[\%]\phantom{R}                                      & ARHR[\%]            & NDCG[\%]          
			&  &
			\phantom{A}HR[\%]\phantom{R}               & ARHR[\%]                                    & NDCG[\%]             
			\\
			\midrule
			\myrowcolour%
			\puresvd\       &
			& 46.84                                  & 25.84           & 30.81
			&	& 9.60 & 3.71 & 5.07 
			\\
			
			SLIM           & 
			& 56.72                                  & 34.50           & 39.67
			&	& 13.64 & 6.05 & 7.85 
			\\
			
			\myrowcolour
			\eigenrec\      &
			& 52.89                                  & 29.36           & 34.93
			&	& 11.83 & 4.68 & 6.34 
			\\
			
			\dae\           &
			& 54.66                                  & 29.75           & 35.60    
			&	& 13.40 & 5.10 & 7.18          \\
			\myrowcolour
			\vae\           &
			& 53.85                                  & 29.24           & 35.08
			&	& 13.43 & 5.56 & 7.32 
			\\
			
			\apr\   &
			& 51.12 & 24.96 & 31.10 
			&	& 11.91 & 4.37 & 6.11 
			\\
			
			\myrowcolour
			\rpb   &
			& 55.60 & 31.68  & 37.35 
			&	& 14.49	& 6.08 & 8.03
			
			\\
			
			\nais\          &
			& 51.18                                  & 28.36           & 33.65
			&	& 12.68 & 5.17 & 6.84 
			\\
			\midrule
			\myrowcolour
			\kstep           &
			& 	\highest{57.73}        & \highest{34.71} & \highest{40.05}
			&	& \highest{14.65} & \highest{6.25} & \highest{8.05} 
			\\
			
			\pr           &
			& \highest{57.73}                                 & 34.65           & 39.94
			&	& 14.25 & 6.18 & 7.96 
			\\
			\bottomrule
			\bottomrule
		\end{tabular}
		\begin{tablenotes}
			\scriptsize
			\item Hyperparameters:
			\recwalk: (fixed) $\alpha = 0.005$, $C = \{2.5\%, 5\%, 7.5\%, 10\%, 25\%\} \text{ of }\card{\set{I}}$,  $\gamma_1 \in \{1,3,5,10\},\gamma_2 \in \{0.1, 0.5, 1, 3, 5, 7, 9, 11, 15, 20\}$ and $K \in \{3, \dots, 20\}$; for \pr\ $\eta = \{0.05,\dots, 0.95\}$.  
			\puresvd: $f \in \{10,20,\dots,1000\}$. \rpb: $b \in \{0,0.05,\dots,1\}$. 
			\slim: We used the implementation~\cite{slim}, $\lambda,\beta \in \{0.1, 0.5, 1, 3, 5, 10, 20\}$.
			\eigenrec: $f \in \{10,20,\dots,1000\}, d \in \{-2,-1.95,\dots,2\}$.
			\dae - \vae: we used the hyperparameter tuning approach provided by the authors in their publicly available implementation; we considered both architectures proposed in~\cite{vaedae}; namely $[I-200-I]$ and $[I-600-200-600-I]$. \apr: We used the implementation provided by~\cite{APR}, considering the hyperparametric setting discussed therein; 
			\nais: We used the implementation provided by~\cite{NAIS}, considering the parametric ranges discussed in the paper, namely $\alpha,k \in \{8,16,32,64\}$, $\beta=0.5$ and regularization parameters $\lambda \in \{10^{-6},\dots,1\}$. 
		\end{tablenotes}
		\vspace*{-0.4cm}
	\end{threeparttable}
\end{table*}

\subsection{Performance against Competing Approaches}
\label{subsec:PerformancevsCompeting}
We evaluate the top-$n$ recommendation accuracy of  \kstep\  and \pr\  against eight competing approaches; namely 
\begin{enumerate}
	\item[i.] the well-known \puresvd\  method~\cite{Cremonesi:2010:PRA:1864708.1864721}, which produces recommendations based on the truncated SVD of \mR;
	\item[ii.] the well-known item-based method \slim~\cite{ning2011slim} which builds a sparse item model by solving an $\ell_1,\ell_2$-regularized optimization problem;
	\item[iii.] the random-walk approach \rpb~\cite{christoffel2015blockbusters}, which recommends based on a short-length walk on the user-item graph after rescaling the landing probabilities to compensate for the inherent  bias of the walk towards popular items;  
	\item[iv.] the \eigenrec\  method~\cite{nikolakopoulos2019eigenrec}, which builds a factored item model based on a scaled cosine similarity matrix;  
	\item[v-vi.] \vae\ and \dae\ ~\cite{vaedae} which extend variational and denoising autoencoders to collaborative filtering using a multinomial likelihood and were shown to achieve state-of-the-art recommendation quality,  outperforming several other deep-network-based approaches;
	\item[vii.]  \apr\ ~\cite{APR}, which extends the popular pairwise ranking method BPR~\cite{rendle2009bpr} by incorporating adversarial training; 
	\item[viii.]  \nais~\cite{NAIS}, which generalizes the well-known FISM~\cite{kabbur2013fism}  recommendation algorithm employing an attention mechanism;
\end{enumerate}

\subsubsection{Results}
Table~\ref{tab:Results} reports the top-$n$ recommendation performance of the competing approaches. The performance was measured in terms of HR@$n$, ARHR@$n$ and NDCG@$n$, focusing on the $n=10$. Model selection was performed for each dataset and metric following the procedure detailed in Section \ref{subsec:evaluationprotocol} and considering for each method the hyperparameters reported on Table~\ref{tab:Results}.

 From the deep network methods, \nais, \vae, \dae\ manage to perform very well, with the latter two showing more consistent performance across datasets. At the same time \vae, \dae\ where significantly more efficient to compute than \nais\ (50x-200x faster training time per epoch). SLIM, does remarkably well in all cases, managing to reach and surpass the performance of deep nets in most datasets (except \vae\ and \dae\ in \textit{pinterest}).  From the simple latent factor models \eigenrec\ performed well in several cases, while also being very efficient and scalable. \rpb\ did remarkably well in \movies, \books, and especially \electronics, where it was found to perform better than every method except \kstep; its performance in \movielens, \yahoo\ and \pin, however, was not competitive to the neural nets and \eigenrec. 

 We see that both variants of \recwalk\ perform very well on every metric and for all datasets. The results indicate the potential of the proposed methodology in achieving high quality top-$n$ recommendations.  
 
  Importantly, the best performing \kstep\ models also managed to ensure excellent itemspace coverage. This is in contrast to SLIM, where coverage and accuracy do not necessarily go hand-in-hand (a comparison of the coverage properties of \kstep\ and \slim\ can be found in Appendix~\ref{Appendix:Coverage}).

\subsection{Runtimes}
\label{Sec:Runtimes}

{Table~\ref{tab:Computational} reports the wall-clock timings for training \recwalk, as well as the rest of the competing methods included in Table~\ref{tab:Results}. }

\begin{table}[h!]
	\caption{Training runtimes of the competing methods on the datasets considered in Table~\ref{tab:Results}.}
	\label{tab:Computational}
	\centering
	\begin{threeparttable}
		\begin{tabular}{r cccccc}
			\toprule%
			&	{\movielens} & {\yahoo} & {\pin} & {\textit{movies\&tv}} & {\textit{books}} &  {\textit{electronics}}
			\\
			\midrule
		\myrowcolour%
		\rpb	& <1s &  <1s	& <1s  & <1s  &  <1s & <1s    \\
		\puresvd	& <1s &  <1s	& 8.1s  & <1s  &  1.5m & <1s    \\
		\myrowcolour%
		\eigenrec	& <1s &  <1s	& 6.8s  & 2.1s  &  1.5m & <1s    \\
		\vae	& 7.9m &  9.1m	& 1.7h  & 14.5m  &  2.8h & 3.9m    \\
		\myrowcolour%
		\dae	& 7.2m &  8.3m	& 1.6h  & 13.1m  &  2.7h & 3.7m    \\			
		\nais	& 6.3h &  2.1h	& 7.3h  &  4.3h  &  21.2h  & 12.6m     \\
		\myrowcolour%
		\slim	& 20.9s &  14.7s	& 14s  & 29.4s  &  2.5m & 1.1s    \\	
		\apr	& 6.9h &  4.1h	& 50h  & 5h  &  51.7h & 43.3m    \\
		\myrowcolour%
		\midrule
		\recwalk	& 2s &  1.3s	& 9s  & 2.1s  &  21s & <1s    \\
			\bottomrule
		\end{tabular}
		\begin{tablenotes}
		\footnotesize
		\item Experiments are ran on a single Intel Xeon Gold 6148 CPU @ 2.40GHz Machine with 20 cores and 64Gb DDR4 RAM. 
	\end{tablenotes}
	\end{threeparttable}
\end{table}

{ 
	Notice that \recwalk\ can be trained significantly faster than the deep-net-based alternatives---attaining runtimes comparable to highly efficient methods like \puresvd\ and \eigenrec. Further training times of \recwalk\ on datasets with large itemspaces can be found in Appendix~\ref{Appendix:LargeItemspaces}.}

\section{Related Work}
\label{Sec:Related_Work}
\textit{Neighborhood-based} methods~\cite{ning2015comprehensive} and in particular \textit{item-based} models~\cite{deshpande2004item,ning2011slim} are among the most popular and effective approaches to tackle the top-$n$ recommendation task. A notable such example is  SLIM~\cite{ning2011slim} which generalizes the traditional item-based CF approaches~\cite{sarwar2001item} to learn a sparse linear item-to-item model. SLIM has been shown repeatedly to yield state-of-the-art top-$n$ recommendation accuracy, and has inspired the proposal of several new methods in recent years~\cite{cslimzheng2014,ning2012sparse,Steck:2019:ESA:3308558.3313710}.         
Despite their success, however, item-based models are known to be negatively impacted by sparsity which can lead to decay in recommendation performance and poor itemspace coverage. 
\textit{Latent Space} methods~\cite{Cremonesi:2010:PRA:1864708.1864721,nikolakopoulos2019eigenrec,hu2008collaborative,koren2010factor,weston2013nonlinear} are particularly well-suited to alleviate such problems.  Generally speaking, the methods of this family work by projecting the elements of the recommender database into a denser subspace that captures their most salient features, giving them the ability to relate previously unrelated elements, and thus making them less vulnerable to sparsity~\cite{nikolakopoulos2019eigenrec}. 
Another family of approaches that are able to address issues related to sparsity and poor itemspace coverage are \textit{Graph-Based} methods~\cite{fouss2012experimental,kang2016top,christoffel2015blockbusters,Cooper:2014:RWR:2567948.2579244}. The innate characteristic that makes the methods of this family suited for alleviating such problems  is that they allow elements that are not directly connected, to ``influence" each other by propagating information along the edges of the underlying user-item bipartite network~\cite{NikolakopoulosWebIntelligence2015,ning2015comprehensive}. The  transitive relations captured in such a way, can be then exploited to estimate measures of proximity between the corresponding nodes~\cite{christoffel2015blockbusters,Cooper:2014:RWR:2567948.2579244} or compute similarity scores between them~\cite{fouss2012experimental}, which can be used afterwards to recommend items accordingly.  Graph-based recommendation methods relying on random walks have also been deployed in several large-scale industrial settings with considerable success~\cite{twitter2,pixie}.   \recwalk\ combines item-models with random walks, and therefore lies at the intersection of neighborhood- and graph-based methods; the inter-item transition component captures the neighborhoods of the items which are then incorporated in a random walk framework to produce recommendations.

Propelled partly by the success of deep learning in areas like speech  recognition, computer vision and natural language processing, recently, there has been a growing body of work applying neural networks to collaborative filtering; thereby extending traditional recommendation algorithms to account for non-linearities as well as to exploit neural-network related heuristics like \textit{attention mechanisms} and \textit{adversarial training}. Along these lines, a large number of methods have been proposed which extend latent space methods~\cite{ncf,vaedae,cdae}, learning-to-rank methods~\cite{APR}, factored item similarity methods~\cite{NAIS,DeepICF}. Besides targeting the pure top-n recommendation task deep nets have also been applied in sequential and session-based recommendation settings~\cite{hidasi2015session,kang2018self,tang2018personalized}, as well as in the proposal of hybrid recommendation models in which the core collaborative filtering component is augmented with available meta-information about the users or the items (for a recent survey of related techniques the reader can see~\cite{zhang2019deep}). Despite their promising performance, many neural-network based methods suffer computational and scalability issues; especially when deeper architectures are being deployed. In large scale settings this can reduce the frequency by which the underlying recommendation model can be updated and thus limit their applicability in rapidly evolving real world recommendation settings. Notable exceptions to this, include the work by Liang et al.~\cite{vaedae} that proposes  variational autoencoders to generalize linear latent-factor models showing  promising recommendation accuracy while also being computationally feasible.

The construction of \recwalk\ is inspired by the properties of nearly uncoupled Markov chains. The analysis of nearly uncoupled systems---also referred to as \textit{nearly decomposable} systems---has been pioneered by Simon~\cite{simon1961aggregation}, who reported on state aggregation in linear models of economic systems. However, the universality of Simon's ideas has permitted the theory to be used with significant success in the analysis of complex systems  arising in social sciences and  economics~\cite{ando1963essays,JOI:1358112,Sarasvathy2003203}, evolutionary biology~\cite{shpak2004simon}, cognitive  science~\cite{meunier2010modular}, administrative sciences and management~\cite{yakob2007managing,yayavaram2008decomposability}, etc. The introduction of these ideas in the fields of computer science and engineering can be traced  back to  Courtois \cite{courtois1977decomposability} who applied Simon's theory in the performance analysis of computer systems. More recently, near decomposability has been recognized as a property of the Web~\cite{blockrank} and it has inspired the development of algorithms for faster computation of PageRank~\cite{zhu2005distributed,cevahir2011site} (building on a large body of related research in the field of numerical linear algebra; see e.g.,\cite{stewart:1994introduction,stewart1991sensitivity}) as well as the development of new network centrality measures~\cite{nikolakopoulos2013ncdawarerank,AggregateRank,RSMG}. In the field of recommender systems the notion of decomposability has inspired the development of methods for incorporating meta-information about the items~\cite{nikolakopoulos2013novel,NikolakopoulosK15,NikolakopoulosWebIntelligence2015} with the blocks chosen to highlight known structural/organizational  properties of the underlying itemspace. Here, on the contrary, we exploit decomposability in the \textit{time-domain} with the blocks defined to separate the short-term from the long-term temporal dynamics of the walk in order to effect the desired mixing properties that can lead to improved recommendation performance. 

{Prior work on alleviating the devastating effects of item-popularity on the accuracy, as well as the diversity of graph-based methods, includes the hybrid method of Zhou \textit{et al.}\cite{Zhou4511}, and the re-ranking  approach proposed by Christoffel \textit{et al.}~\cite{christoffel2015blockbusters}. While the underlying goals of such methods are aligned with the motivations behind \recwalk, the approach we follow herein differs significantly. \recwalk\ leverages the spectral properties of nearly uncoupled Markov chains to enforce a time-scale dissociation of the stochastic dynamics of the walk towards equilibrium which increases the number of successive landing distributions that are still influenced by the user-specific initialization of the walk. Therefore, contrary to the aforementioned methods, our approach eliminates the need of ending the walks early---thereby allowing the walker ample time to explore the underlying network (and increase coverage), before the produced recommendation vectors start concentrating probability mass towards the popular items in the system. Furthermore, \recwalk\ introduces a novel methodology for incorporating item models in the underlying random-walk-based framework---leading to significant improvements in recommendation accuracy. 
}

\section{Conclusions and Future Directions}
\label{Sec:Conclusions}
Combining random walks with item models has the potential of exploiting better the information encoded in the item-to-item relations---leading to improved itemspace coverage and  increased top-$n$ recommendation accuracy. To gain such benefits, however, one needs to define judiciously the transition probabilities of the walks in order to counterbalance their tendency to rapidly concentrate towards the central nodes of the graph. 
To this end we introduced \recwalk; a novel random walk framework for top-$n$ recommendations that can provably provide control over convergence allowing the walk to harvest more effectively the rich network of interactions encoded within the item model on top of which it is built. Our experiments reveal that the mixing properties of the walks are indeed intertwined  with top-$n$ recommendation performance.

A very interesting direction we are currently pursuing involves the exploration of methods for statistical learning in the space of landing probabilities~\cite{AdaDIF_TSP} produced by \recwalk. Here we proposed two simple recommendation strategies to exploit these landing probabilities that were able to provide high top-$n$ recommendation accuracy, outperforming several state-of-the-art competing approaches. Our findings showcase the value of combining item-models with graph-based techniques.

\begin{acks}
This work was supported in part by NSF (1901134, 1447788, 1704074, 1757916, 1834251), Army
Research Office (W911NF1810344), Intel Corp, and the Digital Technology Center at the
University of Minnesota. Access to research and computing facilities was provided by
the Digital Technology Center and the Minnesota Supercomputing Institute.
\end{acks}

\appendix

\section{Ergodicity of the \recwalk\ chain}
\label{Appendix:Ergodicity}
\begin{proposition}[Ergodicity of \recwalk]
	\label{Prop:ergodicity}
	When $\graph{G}$ is connected and $\alpha \in (0,1)$, the Markov chain with transition probability matrix \mP\ is ergodic.
\end{proposition}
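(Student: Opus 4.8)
The plan is to establish ergodicity in the standard way for a finite-state chain: show that $\mathbf{P}$ is \emph{irreducible} and \emph{aperiodic}. For a finite stochastic matrix these two properties together are equivalent to ergodicity (equivalently, to $\mathbf{P}$ being primitive), so no separate argument for positive recurrence is required.

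For irreducibility, I would compare the directed transition graph of $\mathbf{P}$ with that of $\mathbf{H}$. Since $\mathbf{M}$ is entrywise nonnegative and $\alpha>0$, for every ordered pair of states $(i,j)$ with $H_{ij}>0$ we have $P_{ij}=\alpha H_{ij}+(1-\alpha)M_{ij}\ge \alpha H_{ij}>0$; hence the transition graph of $\mathbf{P}$ contains every edge of the transition graph of $\mathbf{H}$. When $\mathcal{G}$ is connected, $\mathbf{H}$ is the irreducible transition matrix of a simple random walk on $\mathcal{G}$ (see~\cite{langville2011google}), so its transition graph is strongly connected; therefore the transition graph of $\mathbf{P}$ is strongly connected as well, i.e., $\mathbf{P}$ is irreducible.

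For aperiodicity, the key point — and the only step that needs a little care — is that one cannot inherit this property from $\mathbf{H}$, which is \emph{periodic} with period $2$ because $\mathcal{G}$ is bipartite; instead aperiodicity comes from the ``stay put'' behaviour at user nodes encoded in the $(1-\alpha)\mathbf{M}$ term. Concretely, fix any user node $u\in\mathcal{U}$ (such a node exists, and since every row of $\mathbf{R}$ is nonzero, $\mathbf{H}$ is well defined at $u$). Then $H_{uu}=0$ but $M_{uu}=1$, so $P_{uu}=(1-\alpha)>0$ because $\alpha<1$. A state with positive self-loop probability has period $1$, and in an irreducible chain all states share a common period; hence $\mathbf{P}$ is aperiodic.

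Putting the two halves together, $\mathbf{P}$ is a finite irreducible aperiodic stochastic matrix and thus ergodic (in particular primitive), which is exactly the claim; the limiting/stationary distribution of the $K$-step landing probabilities then exists and is unique by the Perron--Frobenius theorem~\cite{grimmett2001probability}. I do not anticipate a genuine obstacle in this argument — the proof is short once one recognizes that irreducibility is supplied by the $\alpha\mathbf{H}$ component while aperiodicity must be supplied separately by the self-loops of the $(1-\alpha)\mathbf{M}$ component, using the full open range $\alpha\in(0,1)$ to keep both components active.
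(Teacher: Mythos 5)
Your proof is correct and follows essentially the same route as the paper's: irreducibility is inherited from the $\alpha\mathbf{H}$ component via the connectivity of $\graph{G}$, and aperiodicity comes from the self-loops at user nodes supplied by the identity block of $\mathbf{M}$ (the paper phrases this as two closed walks of lengths $m$ and $m+1$ through such a self-loop, which is the same observation as your $P_{uu}=1-\alpha>0$). Your direct appeal to the equivalence of ergodicity with irreducibility plus aperiodicity for finite chains replaces the paper's explicit remark that finiteness forces positive recurrence, but this is only a cosmetic difference.
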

\begin{proof}
	It suffices to show that the resulting \recwalk\ chain will necessarily be irreducible and its states aperiodic and non-null recurrent~\cite{grimmett2001probability}.
	
	\begin{description}
		\item[Proof of Irreducibility:] The state space, $\set{S}$, of every Markov chain  can be partitioned uniquely as 
		\begin{displaymath}
		\set{S} = \set{T}\cup\set{C}_1\cup\set{C}_2 \cdots
		\end{displaymath}
		where $\set{C}_i$ are irreducible closed sets of recurrent states, and $\set{T}$ is the set transient states\cite{grimmett2001probability}. Therefore, to prove irreducibility it suffices to show that the transient set $\set{T}$ is empty, and that there exists a single irreducible communicating class $\set{C}$ containing all states. Let $i$ be a recurrent state and let $\set{C}$ be  the set that contains it.\footnote{Note that we can always choose such a state since Markov chains with finite state space are guaranteed to contain at least one recurrent state~\cite{grimmett2001probability}.} We will prove that starting from $i$, we can visit every other state of the \recwalk\ chain---and therefore, every state belongs to $\set{C}$. Since by assumption $\alpha \in (0,1)$ and graph $\graph{G}$ is connected, the random walk can always follow the transitions between the states contained in the stochastic matrix \mH. Notice that by definition this matrix can be seen as the transition matrix of a simple random walk on the undirected graph $\graph{G}$. Thus, when  $\alpha \in (0,1)$ the connectivity of $\graph{G}$ is enough to ensure that state $i$ can communicate with every other state in the chain. Therefore, as needed, all states belong to $\set{C}$ and the transient set $\set{T}$ is empty.     
		\item[Proof of Aperiodicity:]  The period of a state is defined as the greatest common divisor of the number of steps at which a return to said state is possible~\cite{grimmett2001probability}. Moreover, all states within the same communicating class have the same period~\cite{grimmett2001probability}. Therefore, due to the irreducibility of the \recwalk\ chain, to prove aperiodicity it suffices to show that there exists an integer $m$ and a state $i$, upon leaving which, there are positive probability random walk trajectories to return to it in $m$ steps, and also in $m+1$ steps. 
		In case of \recwalk\ this can be seen easily. Let 
		\begin{displaymath}
		w \mathdef i\to \cdots\to s \to \cdots \to i
		\end{displaymath} 
		be a positive probability trajectory from $i$ to $i$, that passes through some state $s$ that corresponds to a node $u \in \set{U}$ (notice that when $\graph{G}$ is connected and $\alpha \in(0,1)$ it is always possible to find such a  trajectory), and let $m$ be its length. We construct a new trajectory $w'$ that coincides with $w$ except for the fact that upon leaving state $s$ it follows a self-loop, thus returning to $s$, and then continues exactly like $w$. Notice that $w'$ is also a valid positive probability trajectory from $i$ to $i$ abiding by the transition probability matrix \mP\ and it has length $m+1$. 
		Therefore, it is possible to return to $i$ in $m$ and in $m+1$ steps, which means that $i$ and---consequently its communicating class $\set{C}$---is aperiodic as needed.    
	\end{description}  
	Finally, due to the finiteness of the \recwalk\ chain, we immediately get that every recurrent state is necessarily positive-recurrent~\cite{grimmett2001probability}. Therefore, the chain is ergodic and the proof is complete. 
\end{proof}

\section{Coverage}
\label{Appendix:Coverage}
Table~\ref{tab:Coverage} reports the percentage of users for who the recommendation vectors of each method support at least 50\%, and at least 90\% of the itemspace. For each method we report the coverage of the best performing model in terms of NDCG@10, from Table~\ref{tab:Results}. 
\begin{table}[!h]
	\caption{Coverage of SLIM vs \kstep.}
	\label{tab:Coverage}
	\small
	\centering
	\begin{threeparttable}
		\begin{tabular}{l r rr c rr}
			\toprule%
			\toprule%
			\centering%
			&
			& \multicolumn{2}{c}{{{SLIM}}} 
			& \phantom{abcd}
			& \multicolumn{2}{c}{{{\kstep}}}
			\\
			\cmidrule[0.4pt](lr{0.125em}){3-4}%
			\cmidrule[0.4pt](lr{0.125em}){6-7}%
			Dataset
			& 
			&  $50\%$ & $90\%$ & &  $50\%$ & $90\%$  
			\\
			\midrule
			\myrowcolour%
			\textit{movielens}\phantom{ab}  & 
			& 87.7\%            & 38.1\%    &   & 100\%  & 100\%      \\
			\textit{yahoo}      & 
			& 62.2\%            & 10.2\%   &  &  100\%  & 100\%     \\
			\myrowcolour%
			\textit{movies\&tv} & 
			& 86.5\%   & 13.9 \%   &  & 100\%   & 100\%    \\
			\textit{books}      & 
			& 25.8\%   & 0.9\%   &   & 100\%   & 100\%     \\
			\myrowcolour%
			\myrowcolour%
			\textit{electronics} & 
			& 73.2\%                                       & 3.8\%   &    & 100\%            & 100\%       \\
			\textit{pinterest}      & 
			& 0.02\%                                        & 0.0\%    &  & 100\%            & 100\%      \\
			\bottomrule
			\bottomrule
		\end{tabular}
	\end{threeparttable}
\end{table}

Note that while in case of \pr\ coverage is guaranteed for any $\eta \in (0,1)$, for \kstep\ coverage is controlled by the selected number of steps $K$. Our results indicate that the $K$ that leads to better recommendation accuracy also does an excellent job covering the itemspace in all datasets. In case of SLIM, on the contrary, we see that coverage and recommendation accuracy do not necessarily accord.

\section{Additional Runtimes for Larger Datasets}
\label{Appendix:LargeItemspaces}
To further assess the computational efficiency of training \recwalk\ in settings where the number of items in the system is large, we train \recwalk\ on seven larger snapshots of the Amazon dataset~\cite{He:2016:UDM:2872427.2883037} and we report the resulting runtimes in Table~\ref{tab:runtimes2}.
\begin{table}[h]
	\caption{ Training times of \recwalk\  for larger datasets. }
	\centering
	\small
	\begin{threeparttable}
		\begin{tabular} {lrrrr   }
			\toprule
			\toprule
			Name                & \#users & \#items & density & {Training times} \\
			\midrule
			\myrowcolour%
			\textit{Grocery\_and\_Gourmet\_Food}  & 768,438   & 166,049   & $1.02\times10^{-5}$  
			& 2.3m  \\	
			\textit{Beauty}  & 1,210,271   & 249,274   & $6.71\times10^{-6}$  
			& 5.9m  \\	
			\myrowcolour%
			\textit{Health\_and\_Personal\_Care}      & 1,851,132   & 252,331   & $6.38\times10^{-6}$    & 8.9m \\
			\textit{Digital\_Music} & 478,235  & 266,414   & $6.56\times10^{-6}$  
			& 2.5m  \\		
			\myrowcolour%
			\textit{Cell\_Phones\_and\_Accessories}  & 2,261,045   & 319,678   & $4.77\times10^{-6}$  
			& 13.3m  \\		
			\textit{CDs\_and\_Vinyl} & 1,578,597  & 486,360   & $4.88\times10^{-6}$  
			& 20.1m  \\			
			\myrowcolour%
			\textit{Clothing\_Shoes\_and\_Jewelry}      & 3,117,268   & 1,136,004   & $1.62\times10^{-6}$    & 98.9m \\	

			\bottomrule
			\bottomrule 
		\end{tabular}
		\begin{tablenotes}
			\footnotesize
			\item Experiments are ran on a single Intel Xeon Gold 6148 CPU @ 2.40GHz Machine with 20 cores and 64Gb DDR4 RAM. Parameters $\gamma_1,\gamma_2$ are fixed to the value 10. $C$ is fixed to 200.
		\end{tablenotes}
	\end{threeparttable}
	\label{tab:runtimes2}
\end{table}

Our results indicate that \recwalk\ can be trained efficiently even for large itemspaces. 

\section{Performance for Different Levels of Sparsity}
\label{Appendix:Sparsity}
{To assess the quality of \recwalk\ in the presence of sparsity, we conduct the following experiment: we take the  \movielens\ data and we create four artificially sparsified versions of the dataset by randomly selecting to include 40--85\% of the available data. We train RecWalk in each setting and in Fig.\ref{fig:sparsityperformance} we report its out-of-sample performance in terms of NDCG@10.  For reference we also report \recwalk's performance when all available data are exploited (``100\%'' bar in Fig.\ref{fig:sparsityperformance}). } 
\begin{figure}[h!]
	\centering
	\includegraphics[width=0.7\linewidth]{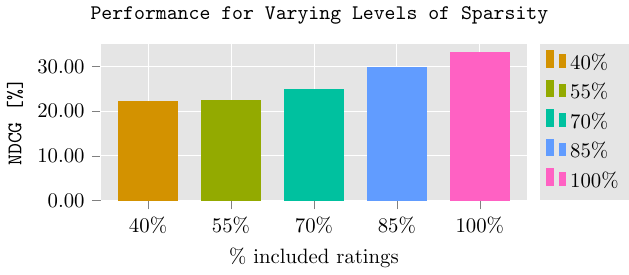}
	\caption{The figure reports the performance of \recwalk\ on the \movielens\ data, under different sparsity settings. Model selection was performed for each setting following the procedure detailed in Section \ref{subsec:evaluationprotocol} and by considering the hyperparameters: (fixed) $C = 200$,  (fixed) $\alpha = 0.005$, $\gamma_1 \in \{1,3,5,10\},\gamma_2 \in \{0.1, 0.5, 1, 3, 5, 7, 9, 11, 15, 20\}$ and $K \in \{3, \dots, 20\}$.}
	\label{fig:sparsityperformance}
\end{figure}

{We find that \recwalk\ manages to retain more that 2/3 of its full recommendation accuracy even in the case where only 40\% of the available feedback is exploited (specifically, \recwalk's NDCG@10 on the sparsest case is 33.2\% lower  than its corresponding performance when all available data are used).  
	Expectedly, as the percentage of included interactions grows, the quality of the trained model improves and the recommendation accuracy increases significantly. 
	Finally, we find that full itemspace coverage is attained for every user in the system even in the sparsest setting. 
}

\end{document}
\endinput